%% file: main_.tex
\documentclass[accepted]{uai2026} 
                        
\usepackage[american]{babel}

\usepackage{natbib} 
\usepackage{mathtools} 
\usepackage{booktabs} 
\usepackage{tikz} 
\usetikzlibrary{positioning,calc}
\usepackage{amsthm}
\usepackage{comment}
\usepackage{thmtools}
\usepackage{amsmath}
\usepackage{cleveref}
\usepackage{mathbbol}
\usepackage{cancel}
\usepackage[normalem]{ulem}
\usepackage{subcaption}
\declaretheorem[name=Theorem,within=section]{thm}

\declaretheorem[name=Assumption,numberlike=thm]{asn}

\DeclareMathOperator{\expit}{expit}

\DeclareMathOperator{\pa}{pa}

\def\ci{\perp\!\!\!\perp}

\usepackage[normalem]{ulem}
\newcommand{\stkout}[1]{\ifmmode\text{\sout{\ensuremath{#1}}}\else\sout{#1}\fi}

\usepackage{enumitem}
\setlist{nolistsep}

\title{Proximal Identification and Estimation in Front-Door Causal Structures with Unobserved Confounding of the Mediator}

\author[1]{\href{mailto:<hguo51@jh.edu>?Subject=Proximal Confounded Mediator Paper}{Helen Guo}{*}}
\author[2]{Beatrix Yaxin Wen}
\author[3]{Ilya Shpitser}
\affil[1]{%
    Biostatistics Dept.\\
    Johns Hopkins Bloomberg School of Public Health\\
    Baltimore, MD, USA
}
\affil[2]{%
    Applied Mathematics and Statistics Dept.\\
    Johns Hopkins Whiting School of Engineering\\
    Baltimore, MD, USA
}
\affil[3]{%
    Computer Science Dept.\\
    Johns Hopkins Whiting School of Engineering\\
    Baltimore, MD, USA
  }
  
\begin{document}
\maketitle
\begin{abstract}
Unobserved confounding is a fundamental obstacle in causal inference problems. In the graphical modeling literature, a general theory has been developed that allows identification in the presence of hidden variables, with some limitations. In particular, Pearl's celebrated front-door criterion allows nonparametric identification in the presence of unobserved common causes of the treatment and the outcome, however it requires the presence of an unconfounded variable that mediates all causal influence from the treatment to the outcome. This stringent requirement limits the applicability of the front-door criterion.
We propose proximal generalizations of the front-door criterion, allowing both arbitrary treatment/outcome confounding, and unobserved confounders of the mediator, provided informative proxies for the latter type of confounders are observed.  In addition to deriving three new identification strategies in this setting, we provide plug-in and influence function-based estimation strategies for the resulting functionals, and evaluate their performance through simulations.
\end{abstract}

\input{1Introduction}
\input{Background}

\input{Identification}

\input{K_Estimation}

\input{L_Experiments}
\input{N_Discussion}

\newpage

\begin{contributions}
Helen Guo, Beatrix Wen, and Ilya Shpitser developed the identification results.
Helen Guo developed the semiparametric estimation results.
Beatrix Wen and Helen Guo designed and conducted the simulation studies.
\end{contributions}

\begin{acknowledgements}
This work was supported by the Office of Naval Research (Grant No.~N000142412701), the National Institutes of Health (Grant No.~1R56AI191526-01), the National Science Foundation CAREER Award (Grant No.~1942239), and the National Library of Medicine, (Grant No. 1R01LM014800-01A1).
\end{acknowledgements}

\bibliographystyle{plainnat}
\bibliography{refs}

\newpage

\onecolumn

\title{Supplementary Material}
\maketitle



\appendix

\input{S_Appendix}

\end{document}

%% file: 1Introduction.tex
\section{introduction}

Causal effects are not 
identified in general in the presence of hidden common causes of treatment and outcome variables. A rich body of work has proposed strategies to overcome this obstacle via nonparametric identification theory developed in the graphical modeling literature, leading to the ID algorithm and extensions \citep{pearl1995causal,tianIdentificationCausalEffects2002,shpitser06id, huang06do}, or via methods that rely on fortuitous external variables such as instruments, mediators, or proxies (i.e., causal consequences of hidden variables) \citep{newey03instrumental, kuroki2014measurement, miao2018identifying}. Recent work \citep{shpitser2023proximal} has combined graphical model-based identification theory and proximal methods to obtain identification in settings where neither approach alone suffices.



We present proximal causal inference results in a novel setting, extending Pearl's front-door method \citep{pearl1995causal} to settings where unobserved common causes of the mediator, treatment, and outcome exist.
Specifically, we consider a causal model represented by the \emph{composite bow graph} in Fig.~\ref{fig:primal_v0}. This model captures a common scenario in which some hidden variable(s) confound only treatment \(A\) and outcome \(Y\), while others confound treatment \(A\), mediator \(\vec{M}\), and outcome \(Y\). Such structures arise frequently in practice, as mediators often share unmeasured causes with both treatment and outcome, thereby violating the assumptions required for identification based on the front-door criterion.

\begin{figure}[h!]
\centering
\begin{tikzpicture}[>=latex, node distance=1.5cm, anchor=center]

\begin{scope}
    \node (A) [draw, circle] {A};
    \node (M) [draw, circle, right of=A] {M};
    \node (Y) [draw, circle, right of=M] {Y};
    \node (U) [draw, circle, fill=red!30, above of=A, xshift=1.5cm] {U};

    \draw[->] (A) -- (M);
    \draw[->] (M) -- (Y);
    \draw[->] (U) -- (Y);
    \draw[->] (U) -- (A);
    \draw[->] (U) -- (M);
    \draw[<->, red, thick, bend right=30] (A) to (Y);

\end{scope}
\end{tikzpicture}
\caption{The composite bow graph.}
\label{fig:primal_v0}
\end{figure}

We show that when the latent cause(s) affecting \(A\), \(\vec{M}\), and \(Y\) in the composite bow graph admit proxies, the counterfactual distribution \(p(Y(a))\) (where $Y(a)$ denotes the \emph{potential outcome} had the treatment variable $A$ taken value $a$) and thus the causal effect of \(A\) on \(Y\) remains identifiable under mild structural assumptions. We provide three distinct identification strategies, each relying on a different set of structural assumptions. In addition, we develop estimators for functionals obtained by these strategies, including plug-in estimators and an influence function-based estimator.
Finally, we illustrate the validity of our identification methods 
and performance of our proposed estimators through simulation studies.

This paper is organized as follows. Section~\ref{sec:background} reviews relevant background and outlines our contributions. Section~\ref{sec:identification} presents distinct 
identification strategies under different structural assumptions. Section~\ref{sec:estimation} develops estimators of the causal effect under these strategies. Section~\ref{sec:experiments} evaluates the performance of the proposed estimators via simulation, and Section~\ref{sec:discussion} concludes this work with a discussion.

%% file: Background.tex
\section{Background}\label{sec:background}
\subsection{
Graphical Causal Models
}
We represent fully observed causal structures by means of models associated with directed acyclic graphs (DAG)s. In such graphs,
directed edges $(\rightarrow)$ encode direct causal relationships between variables. Conditional independence relations implied by the model are given by the graph via \emph{d-separation}. Such causal structures are able to represent responses of outcome variables $\vec{Y}$ where treatment variables $\vec{A}$ are set to values $\vec{a}$ by means of an external intervention operation denoted by the do operator $\text{do}(\vec{a})$ \citep{pearl2009causality}; these distributions are also written in the \emph{potential outcome} notation as $p(\vec{Y}(\vec{a}))$.
Identification of the resulting interventional distributions 
in causal models of a DAG is obtained by means of the \emph{g-formula}:
\begin{align*}
p(\vec{Y} \mid \text{do}(\vec{a}))
= \sum_{\vec{Y}^* \setminus \vec{Y}}
\left( \prod_{V \in \vec{Y}^*}
p(V \mid \pa_{\cal G}(V)) \middle) \right|_{\vec{A} = \vec{a}}
\end{align*}
Identification of interventional distributions leads to natural estimators of causal parameters of interest, such as the \emph{average causal effect (ACE)}: $\mathbb{E}[Y(A=1) - Y(A=0)]$ for binary $A$.

Causal structures with hidden variables 
are
represented using \emph{acyclic directed mixed graphs} (ADMGs) containing both directed ($\to$) and bidirected ($\leftrightarrow$) edges, and obtained from a hidden variable DAG by means of the latent projection operation \citep{verma90equiv}. In this paper we will also represent hidden variable causal models by \emph{partial latent projections}, where some hidden variables are ``projected out'' from the graph (replaced by bidirected edges) and other hidden variables are explicitly retained (shown in red).  The composite bow graph in Fig.~\ref{fig:primal_v0} is an example of a partial latent projection. Conditional independence relations in these graphs are given by \emph{m-separation criterion} \citep{richardson02ancestral}, the natural extension of d-separation to ADMGs. 

In hidden variable models, {the task of identification of}
interventional distributions becomes considerably more complicated.  Some interventional distributions are not identified at all, while others are expressible as a functional of the observed data given by the ID algorithm \citep{tianIdentificationCausalEffects2002,shpitser06id, huang06do}. A number of criteria have been developed to classify DAGs or ADMGs where the interventional distribution is identified by a particular simple functional. We now describe two such examples.




\subsection{The back-door Criterion}
The back-door criterion, introduced by \citet{pearl93comment}, provides graphical conditions under which interventional distribution \(p(Y(a))\) and thus the causal effect of \(A\) on \(Y\)
is identified for binary $A$, along with an identifying formula.

A set of observed variables $\vec{C}$ satisfies the back-door criterion relative to an ordered treatment-outcome pair ($A,Y$) in 
an ADMG ${\cal G}$ if: 
\begin{itemize}
    \item[1.] No node in $\vec{C}$ is a descendant of $A$; and 
    \item[2.] $\vec{C}$ m-separates every back-door path between $A$ and $Y$ (e.g. every path from $A$ to $Y$ that starts with an arrowhead into
    $A$).
\end{itemize}
For example, \(\vec{C}\) satisfies the back-door criterion relative to $A$ and $Y$ in the simple case shown in Fig. \ref{fig:backdoor_obs}, where $\vec{C}$ is a common cause of $A$ and $Y$. If \(\vec{C}\) satisfies the back-door criterion, the interventional distribution \(p(Y(a))\) is identified from the observed data distribution $p(Y,A,\vec{C})$ by the \emph{adjustment formula}, shown below in Equation~\ref{eq:back-door}:
\begin{equation}\label{eq:back-door}
p(Y(a)=y) = \sum_{\vec{c}}p(y|a,\vec{c})p(\vec{c})
\end{equation}
Had a common cause of $A$ and $Y$ not been observed, shown as a red vertex $U$ in Fig.~\ref{fig:backdoor_hidden},
the 
interventional distribution \(p(Y(a))\) 
is generally not identified from the observed data distribution, which is now $p(Y,A)$.






\begin{figure}[htbp]
\centering

\begin{subfigure}{0.1\textwidth}
\centering
\begin{tikzpicture}[>=latex, node distance=1cm]
    \node (A) [draw, circle] {A};
    \node (Y) [draw, circle, right of=A] {Y};
    \node (C) [draw, circle, above of=A, xshift=0.5cm] {C};

    \draw[->] (C) -- (Y);
    \draw[->] (C) -- (A);
    \draw[->] (A) -- (Y);
\end{tikzpicture}
\caption{}
\label{fig:backdoor_obs}
\end{subfigure}
\hspace{0.01\textwidth}
\begin{subfigure}{0.1\textwidth}
\centering
\begin{tikzpicture}[>=latex, node distance=1cm]
    \node (A) [draw, circle] {A};
    \node (Y) [draw, circle, right of=A] {Y};
    \node (U) [draw, circle, fill=red!30, above of=A, xshift=0.5cm] {U};

    \draw[->] (U) -- (Y);
    \draw[->] (U) -- (A);
    \draw[->] (A) -- (Y);
\end{tikzpicture}
\caption{}
\label{fig:backdoor_hidden}
\end{subfigure}
\hspace{0.01\textwidth}
\begin{subfigure}{0.2\textwidth}
\centering
\begin{tikzpicture}[>=latex, node distance=1cm]
    \node (A) [draw, circle] {A};
    \node (M) [draw, circle, right of=A] {M};
    \node (Y) [draw, circle, right of=M] {Y};

    \draw[->] (A) -- (M);
    \draw[->] (M) -- (Y);
    \draw[<->, red, thick, bend left=60] (A) to (Y);
\end{tikzpicture}
\caption{}
\label{fig:backdoor_mediation}
\end{subfigure}

\caption{(a) Observed confounding; (b) Hidden confounding; (c) Observed mediation with hidden treatment-outcome confounding.}
\label{fig:back-door}
\end{figure}

A recent line of work addressed lack of identifiability in the presence of a hidden common cause $U$ in Fig.~\ref{fig:backdoor_hidden} by taking advantage of informative proxy variables for $U$.  In particular, methods developed by \citet{miao2018identifying}, \citet{kuroki2014measurement}, and \citet{allman2015parameter} show that when proxies for an unobserved confounder are available (e.g., $W$ and $Z$ in Fig.~\ref{fig:miao}), along with mild structural assumptions, the counterfactual distribution $p(Y(a))$ can be identified. In particular, the adjustment formula,
\(
\sum_{\vec{u}} p(Y \mid a, \vec{u})\,p(\vec{u}),
\)
can be re-expressed in terms of the observed data distribution, even though $U$ is not observed directly.

\begin{figure}[ht]
\centering

\begin{subfigure}[t]{0.2\textwidth}
\centering
\begin{tikzpicture}[>=latex, node distance=1.3cm, anchor=center,
                    transform shape]
    \node (A) [draw, circle] {A};
    \node (U) [draw, circle, fill=red!30, above of=A, xshift=.65cm] {U};
    \node (Y) [draw, circle, below of=U, xshift=.65cm] {Y};
    \node (W) [draw, circle, right of=U] {W};
    \node (Z) [draw, circle, left of=U] {Z};

    \draw[->] (A) -- (Y);
    \draw[->] (U) -- (Z);
    \draw[->] (U) -- (W);
    \draw[->] (U) -- (Y);
    \draw[->] (U) -- (A);
    \draw[<->, white, thick, bend right=52] (A) to (Y);
\end{tikzpicture}
\caption{Hidden confounder with proxies.}
\label{fig:miao}
\end{subfigure}
\hspace{0.02\textwidth}
\begin{subfigure}[t]{0.2\textwidth}
\centering
\begin{tikzpicture}[>=latex, node distance=1.3cm, anchor=center,
                    transform shape]
    \node (A) [draw, circle] {A};
    \node (M) [draw, circle, right of=A] {M};
    \node (Y) [draw, circle, right of=M] {Y};
    \node (U) [draw, circle, fill=red!30, above of=A, xshift=1.3cm] {U};
    \node (W) [draw, circle, right of=U] {W};
    \node (Z) [draw, circle, left of=U] {Z};

    \draw[->] (A) -- (M);
    \draw[->] (M) -- (Y);
    \draw[->] (U) -- (Z);
    \draw[->] (U) -- (W);
    \draw[->] (U) -- (Y);
    \draw[->] (U) -- (A);
    \draw[->] (U) -- (M);
    \draw[<->, red, thick, bend right=30] (A) to (Y);
\end{tikzpicture}
\caption{Composite bow graph with proxies consistent with Assumption~\ref{assump:ci_kp}.}
\label{fig:primal_v3}
\end{subfigure}

\caption{Causal graphs with proxy variables.}
\label{fig:combined_1}
\end{figure}

\subsection{The front-door Criterion}
Leveraging mediator(s) \(\vec{M}\) between \(A\) and \(Y\), the front-door criterion introduced in \citet{pearl1995causal}, provides alternative graphical conditions under which interventional distribution \(p(Y(a))\) is identified, while allowing for hidden common causes of treatment \(A\) and outcome \(Y\).

A set of observed variables $\vec{M}$ satisfies the front-door criterion relative to 
an ordered pair ($A$, $Y$) in 
an ADMG ${\cal G}$ if:
\begin{enumerate}
    \item $\vec{M}$ intercepts all directed paths from $A$ to $Y$;
    \item Every back-door path (paths starting with an arrowhead into $A$) between $A$ and $\vec{M}$ is marginally m-separated;
    \item $A$ m-separates all back-door paths (paths starting with an arrowhead into $\vec{M}$) between $\vec{M}$ and $Y$.
\end{enumerate}

 If \(\vec{M}\) satisfies the front-door criterion, the interventional distribution \(p(Y(a))\) is identified via the \emph{front-door 
 formula}:
\begin{equation}\label{eq:front-door}
p(Y(a)=y) = \sum_{\vec{m},a'} p(y\mid a',\vec{m})p(\vec{m}\mid a)p(a')
\end{equation}
In the ADMG in Fig. \ref{fig:backdoor_mediation}, where the bidirected arc represents a hidden common cause of $A$ and $Y$, \(\vec{M}\) satisfies the front-door criterion relative to $(A,Y)$.

\subsection{
Outline of Our Contributions
}

A common criticism of the front-door criterion in practice is that the assumptions regarding the mediator variable(s) are rarely met in practical problems where unobserved confounding is a concern.  Specifically, while the front-door criterion allows unobserved common causes of $A$ and $Y$, it specifically precludes such causes from influencing the mediator(s) $\vec{M}$.

A more natural generalization of the front-door graph, which we call the 
\emph{composite bow graph,} is shown in Fig. \ref{fig:primal_v0}.
In this graph,
\(\vec{M}\) does not satisfy the front-door criterion due to the presence of a hidden common cause \(\vec{U}\) of \(A\), \(\vec{M}\), and \(Y\). Indeed, $p(Y(a))$ is generally not identifiable in  Fig. \ref{fig:primal_v0}. 

We first note that
had $\vec{U}$ been observed rather than hidden in Fig.~\ref{fig:primal_v0}, the interventional distribution \(p(Y(a))\) 
would have been
identified by a mild generalization of the front-door formula given by Equation~\ref{eq:front-door2} \citep{fulcher2020robust}:
\begin{equation}\label{eq:front-door2}
p(Y(a)=y) =\sum_{\vec{m},a',\vec{u}}p(y\mid a',\vec{m},\vec{u})p(\vec{m}\mid a,\vec{u})p(a',\vec{u}) 
\end{equation}
Next, following the work in \citet{miao2018identifying}, \citet{kuroki2014measurement}, \citet{allman2015parameter}, and \citet{dukes2023proximal},
we show that with the existence of proxies \(W\) and \(Z\) for \(\vec{U}\), along with mild structural assumptions, $p(Y(a))$ may be identified via proximal generalizations 
of the formula in Equation~\ref{eq:front-door2}. We present three different sets of assumption that each guarantee identification in Section~\ref{sec:identification}. 



Our first two identification strategies (Sections~\ref{subsec:id_set1} and~\ref{subsec:id_set2}) adapt the assumptions similar in spirit to those in \citet{miao2018identifying}. These approaches rely on distributional variation restrictions together with the existence of solutions to certain Fredholm integral equations of the first kind.
This line of work has been extended by \citet{ghassami2025causal} to identify causal effects leveraging proxies of hidden mediator(s), and by \citet{dukes2023proximal} to identify direct and indirect effects in the presence of hidden confounding. The first identification strategy we present has previously appeared in \citet{bai2025}, which gives identification and estimation results for the population indirect effect, a functional identical to that of the causal effect in our model.

Our third identification strategy (Section~\ref{subsec:id_set3}) is more closely related to a result from \citet{kuroki2014measurement}, and \citet{allman2015parameter}. 
This strategy imposes a different set of distributional variation restrictions and enables identification of the full joint distribution, from which $p(Y(a))$ 
may be identified via the g-formula.
This line of work traces back to work on unique tensor (multidimensional array) decompositions, 
originating from \citet{kruskal1977three}.

%% file: Identification.tex
\section{Identification}\label{sec:identification}

We now describe the three sets of assumptions we use to allow identification of the interventional distribution $p(Y(a))$ in the composite bow graph when latent cause(s) affecting \(A, \vec{M},\) and \(Y\) admit two proxies. Throughout this work, whenever we consider causal effects, the treatment variable \(A\) is assumed to be binary, taking values in \(\{0,1\}\).

\subsection{Assumption Set 1}\label{subsec:id_set1}

The first identification strategy may be viewed as a front-door analogue of the proximal result in \citet{dukes2023proximal} for direct and indirect effects, and has previously appeared in \citet{bai2025}, which established identification and semiparametric estimation results for population indirect effects. The strategy combines conditional independence relations reflected in Fig.~\ref{fig:primal_v1} (Assumption~\ref{assump:ci_var1}) along with the structural conditions in Assumptions~\ref{assump:completeness_var1}--\ref{assump:bridge_var1}. 

\begin{figure}[ht]
\centering

\begin{subfigure}[t]{0.2\textwidth}
\centering
\begin{tikzpicture}[>=latex, node distance=1.3cm, anchor=center,
                    transform shape]
    \node (A) [draw, circle] {A};
    \node (M) [draw, circle, right of=A] {M};
    \node (Y) [draw, circle, right of=M] {Y};
    \node (U) [draw, circle, fill=red!30, above of=A, xshift=1.3cm] {U};
    \node (W) [draw, circle, right of=U] {W};
    \node (Z) [draw, circle, left of=U] {Z};

    \draw[->] (A) -- (M);
    \draw[->] (M) -- (Y);
    \draw[->] (U) -- (Z);
    \draw[->] (U) -- (W);
    \draw[->] (U) -- (Y);
    \draw[->] (U) -- (A);
    \draw[->] (U) -- (M);
    \draw[->] (A) -- (Z);
    \draw[<->, red, thick, bend right=30] (A) to (Y);
    \draw[->] (W) -- (Y);
\end{tikzpicture}
\caption{Composite bow graph with proxies consistent with Assumption \ref{assump:ci_var1}.}
\label{fig:primal_v1}
\end{subfigure}
\hspace{0.02\textwidth}
\begin{subfigure}[t]{0.2\textwidth}
\centering
\begin{tikzpicture}[>=latex, node distance=1.3cm, anchor=center,
                    transform shape]
    \node (A) [draw, circle] {A};
    \node (M) [draw, circle, right of=A] {M};
    \node (Y) [draw, circle, right of=M] {Y};
    \node (U) [draw, circle, fill=red!30, above of=A, xshift=1.3cm] {U};
    \node (W) [draw, circle, right of=U] {W};
    \node (Z) [draw, circle, left of=U] {Z};

    \draw[->] (A) -- (M);
    \draw[->] (M) -- (Y);
    \draw[->] (U) -- (Z);
    \draw[->] (U) -- (W);
    \draw[->] (U) -- (Y);
    \draw[->] (U) -- (A);
    \draw[->] (U) -- (M);
    \draw[->] (W) -- (Y);
    \draw[->] (W) -- (A);
    \draw[->] (Z) -- (M);
    \draw[<->, red, thick, bend right=30] (A) to (Y);
\end{tikzpicture}
\caption{Composite bow graph with proxies consistent with Assumption \ref{assump:ci_var2}.}
\label{fig:primal_v2}
\end{subfigure}

\caption{Composite bow graphs under different proxy assumptions.}
\label{fig:combined_2}
\end{figure}




\begin{asn}\label{assump:ci_var1}
(Conditional Independences)
\begin{itemize}[noitemsep]
\item[1.] $Y \ci Z \mid \vec{U}, \vec{M}, A$
\item[2.] $W \ci A, \vec{M}, Z \mid \vec{U}$
\item[3.] $W, \vec{M} \ci Z \mid \vec{U}, A$
\end{itemize}
\end{asn}
Assumption~\ref{assump:ci_var1} 
may be represented graphically via the m-separation criterion in the variant of the composite bow graph shown in Fig.~\ref{fig:primal_v1}.  Note that this assumption allows association between the proxy $Z$ and the treatment $A$, as well as the proxy $W$ and the outcome $Y$.



\begin{asn}\label{assump:completeness_var1}
(Completeness) For each \((a, \vec{m})\) and any square integrable \(g\),
    \begin{itemize}[noitemsep]
        \item[1.] $\mathbb{E}\{g(\vec{u}) \mid z, a,\vec{m}\}=0$ a.s. iff $g(\vec{u})=0$ a.s. 
        \item[2.] $\mathbb{E}\{g(\vec{u}) \mid z,a\}=0$ a.s. iff $g(\vec{u})=0$ a.s. 
    \end{itemize}
\end{asn}
Assumption~\ref{assump:completeness_var1} imposes distributional variation restrictions referred to as \emph{completeness} assumptions and can be satisfied without parametric distributional restrictions \citep{newey03instrumental, Hu_Completeness}. {The assumption states that objects in $\{ p(\vec{u}_i \mid Z, a, \vec{m}) : \vec{u}_i \}$ exhibit sufficient variation across values $\vec{u}_i$ for each $(a,\vec{m})$.  Applying Bayes' rule, this assumption also implies sufficient variation of distributions in $\{ p(Z \mid \vec{u}_i,a,\vec{m}) : \vec{u}_i \}$.  
Note that these latter distributions are not functions of $\vec{m}$ by Assumption~\ref{assump:ci_var1}.} Additionally, if the dimension of \(Z\) is greater than that of \(U\), distributions that fail to satisfy Assumption~\ref{assump:completeness_var1} can be made arbitrarily close in total variation distance to distributions that do satisfy it \citep{cui2024semiparametric, Canay_testability2013}. Since completeness assumptions are not testable, the plausibility of Assumption~\ref{assump:completeness_var1} relies on domain knowledge. For a discussion of completeness assumptions sufficient to understand this paper, see \citet{Guo2025}.

To build intuition for the meaning of this assumption, consider a finite-support setting in which we form a stochastic matrix which has rows \(i\) corresponding to $p(\vec{u}_i \mid Z,a,\vec{m})$. This matrix may be viewed as defining a linear operator that maps a function $g(\vec U)$ to its conditional expectation given $Z$. 
For example, multiplying the matrix by a vector with entries $g(\vec{u}_i)$ yields
\begin{equation}\label{eq:completeness_ex1}
\sum_{\vec{u}_i} g(\vec{u}_i)\, p(\vec{u}_i \mid Z,a,\vec{m})
= \mathbb{E}\!\left[ g(\vec U) \mid Z,a,\vec{m} \right]
\end{equation}
If the rows of the matrix are linearly independent, then the matrix has full row rank. Assumption~\ref{assump:completeness_var1} imposes a continuous analogue of this full row rank requirement, so that if there exists a function $h_1$ such that
\begin{equation}
\begin{aligned}
&\sum_{\vec{u}}p(y \mid \vec{u},a,\vec{m},z)p(\vec{u} \mid z, a,\vec{m})\\
&= \sum_{w, \vec{u}} h_1(\cdot)p(w \mid u,a,\vec{m},z)p(\vec{u} \mid z, a,\vec{m}),
\end{aligned}
\end{equation}
then 
\begin{equation}
p(y \mid \vec{u},a,\vec{m},\cancel{z}) = \sum_{w} h_1(\cdot)p(w \mid \vec{u},\cancel{a,\vec{m},z}),
\end{equation}
where independence of the left hand side of $z$ and right hand side of $a,\vec{m},z$ follow by Assumption~\ref{assump:ci_var1}.



Completeness enables identification in tandem with the following assumption.

\begin{asn}\label{assump:bridge_var1}
(Fredholm Equation Solutions) For each \((a, m)\),
\begin{enumerate}
    \item There exists a function $h_1$ such that
    \[
    p(y \mid z,a,\vec{m})
    =
    \sum_{w} h_1(y,a,\vec{m},w)\, p(w \mid z,a,\vec{m})
    \]
    \item There exists a function $h_0$ such that
    \begin{align*}
    &\sum_{w,m} h_1(y,\vec{m},a',w)\, p(w,\vec{m} \mid z,a)
    \\
    &\qquad =
    \sum_{w} h_0(y,w,a',a)\, p(w \mid z,a)
    \end{align*}

\end{enumerate}
\end{asn}
Assumption~\ref{assump:bridge_var1} posits solutions to Fredholm integral equations of the first kind, which are well studied in functional analysis literature \citep{kress99linear}. In finite-support settings, solutions can be constructed easily under relevant matrix invertibility conditions; see Appendix~\ref{app:bridge_set1}. These invertibility conditions state that distributions in $\{p(W \mid z,a,\vec{m}) : z \in \mathcal{Z}\}$ exhibit sufficient variation across values $z$ in its support $\mathcal{Z}$ for each $(a,\vec{m})$, and that distributions in \(\{p(W \mid z,a) : z \in \mathcal{Z}\}\) likewise vary sufficiently across values $z$ for each $a$.

Given these assumptions, we have the following result.
\begin{thm}\label{thm:id_set1}
Under Assumptions~\ref{assump:ci_var1}--\ref{assump:bridge_var1}, \(p(Y(a))\) is identified in Fig.~\ref{fig:primal_v1} by Equation~\ref{eq:id_set1} for every value $y$ as follows

\begin{equation}\label{eq:id_set1}
p(Y(a)=y)
\;=\;
\sum_{a',w} h_0(y, w,a',a)\, p(w,a')
\end{equation}
\end{thm}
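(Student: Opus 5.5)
We start from the generalized front-door formula in Equation~\ref{eq:front-door2}, which is valid in Fig.~\ref{fig:primal_v1} because the bidirected arc only confounds $A$ and $Y$ and $\vec{U}$ blocks every remaining back-door path. The proof then eliminates $\vec{U}$ in two stages, one for each bridge function. The first stage turns $p(y\mid a',\vec{m},\vec{u})$ into an integral of $h_1$ against $p(w\mid\vec{u})$. The second stage turns the resulting sum over $\vec{m}$ against $p(\vec{m}\mid a,\vec{u})$ into an integral of $h_0$ against $p(w\mid\vec{u})$. Averaging over $p(a',\vec{u})$ and using $W\ci A\mid\vec{U}$ then collapses everything to $\sum_{a',w}h_0\,p(w,a')$.

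\textbf{Step 1.} We lift the first bridge equation to the latent level. We write $p(y\mid z,a,\vec{m})=\sum_{\vec{u}}p(y\mid\vec{u},a,\vec{m})\,p(\vec{u}\mid z,a,\vec{m})$, using Assumption~\ref{assump:ci_var1}.1. We also write $p(w\mid z,a,\vec{m})=\sum_{\vec{u}}p(w\mid\vec{u})\,p(\vec{u}\mid z,a,\vec{m})$, using Assumption~\ref{assump:ci_var1}.2–3. Substituting into Assumption~\ref{assump:bridge_var1}.1 gives
\[
\mathbb{E}\Bigl\{p(y\mid\vec{U},a,\vec{m})-\sum_w h_1(y,a,\vec{m},w)\,p(w\mid\vec{U})\Bigm| z,a,\vec{m}\Bigr\}=0 \quad \text{for all } z.
\]
Completeness, Assumption~\ref{assump:completeness_var1}.1, then yields
\[
p(y\mid\vec{u},a,\vec{m})=\sum_w h_1(y,a,\vec{m},w)\,p(w\mid\vec{u}) \quad \text{a.s.}
\]
This is exactly the heuristic already displayed after Assumption~\ref{assump:completeness_var1}.

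\textbf{Step 2.} We lift the second bridge equation in the same way. By Assumption~\ref{assump:ci_var1}.2–3, $p(w,\vec{m}\mid z,a)=\sum_{\vec{u}}p(w\mid\vec{u})\,p(\vec{m}\mid\vec{u},a)\,p(\vec{u}\mid z,a)$, where we use $W\ci\vec{M}\mid\vec{U},A$ and the fact that neither $W$ nor $\vec{M}$ depends on $Z$ given $(\vec{U},A)$. Hence the left side of Assumption~\ref{assump:bridge_var1}.2 equals
\[
\mathbb{E}\Bigl\{\sum_{\vec{m}}\Bigl[\sum_w h_1(y,\vec{m},a',w)\,p(w\mid\vec{U})\Bigr]p(\vec{m}\mid\vec{U},a)\Bigm| z,a\Bigr\}.
\]
The right side equals $\mathbb{E}\{\sum_w h_0(y,w,a',a)\,p(w\mid\vec{U})\mid z,a\}$. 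Completeness, Assumption~\ref{assump:completeness_var1}.2, gives equality of the bracketed functions of $\vec{u}$. By Step~1 with $a'$ in place of $a$, the left function is $\sum_{\vec{m}}p(y\mid\vec{u},a',\vec{m})\,p(\vec{m}\mid\vec{u},a)$. We therefore conclude
\[
\sum_{\vec{m}}p(y\mid\vec{u},a',\vec{m})\,p(\vec{m}\mid a,\vec{u})=\sum_w h_0(y,w,a',a)\,p(w\mid\vec{u}).
\]

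\textbf{Step 3.} We multiply the identity from Step~2 by $p(a',\vec{u})$ and sum over $(a',\vec{u})$. The left side becomes Equation~\ref{eq:front-door2}. On the right side, $p(w\mid\vec{u})\,p(a',\vec{u})=p(w,a',\vec{u})$ because $W\ci A\mid\vec{U}$, and marginalizing $\vec{u}$ gives Equation~\ref{eq:id_set1}.

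\textbf{Expected obstacle.} The main care point is the bookkeeping of conditional independences in Steps~1–2. In particular, we must check that $p(w\mid\vec{u},a,\vec{m},z)=p(w\mid\vec{u})$ and $p(\vec{m}\mid\vec{u},a,z)=p(\vec{m}\mid\vec{u},a)$ follow from Assumption~\ref{assump:ci_var1}. We also need the square-integrability required to apply completeness, which we will simply assume for the bracketed functions.

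A lesser point is justifying Equation~\ref{eq:front-door2} itself in Fig.~\ref{fig:primal_v1}. The presence of $W$ and $Z$ does not affect it: $\vec{U}$ suffices as an adjustment set for $A\to\vec{M}$, and $(A,\vec{U})$ blocks the $\vec{M}$–$Y$ back-door paths. For that we cite \citet{fulcher2020robust}.
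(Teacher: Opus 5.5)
Your proposal is correct and follows essentially the same route as the paper. Both arguments lift each bridge equation to the latent level using Assumption~\ref{assump:ci_var1} and the corresponding completeness condition, substitute into Equation~\ref{eq:front-door2}, and collapse $\vec{U}$ via $W \ci A \mid \vec{U}$. The only difference is cosmetic: you fold the Step~1 identity into the latent second bridge before integrating against $p(a',\vec{u})$, whereas the paper performs that substitution inside its final chain of equalities.
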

See Appendix~\ref{app:id_set1} for a proof.

This result re-expresses {Equation} \ref{eq:front-door2} such that the first two terms correspond to a bridge function $h_0$, and the last term replaces marginalization over the (unobserved) $U$ with a marginalization over the (observed) proxy $W$.  

\subsection{Assumption Set 2}\label{subsec:id_set2}

The second identification strategy relies on the conditional independence relations in Assumption~\ref{assump:ci_var2}, reflected in the structure of Fig.~\ref{fig:primal_v2}, together with the structural conditions in Assumptions~\ref{assump:completeness_var2}--\ref{assump:bridge_var2}. 

\begin{asn}\label{assump:ci_var2} 
(Conditional Independences)
\begin{itemize}[noitemsep]
\item[1.] $Y \ci Z \mid \vec{U}, \vec{M}, A$
\item[2.] $W \ci \vec{M}, Z \mid \vec{U}, A$
\item[3.] $Z \ci W, A \mid \vec{U}$
\end{itemize}
\end{asn}

Assumption~\ref{assump:ci_var2} 
may be represented graphically via the m-separation criterion in the variant of the composite bow graph shown in Fig.~\ref{fig:primal_v2}.  Note that 
Assumptions \ref{assump:ci_var1} and \ref{assump:ci_var2} represent non-nested models, meaning neither is more general than the other.


\begin{asn}\label{assump:completeness_var2}
(Completeness) For each \((a, \vec{m})\) and any square integrable \(g\),
    \begin{itemize}[noitemsep]
        \item[1.] $\mathbb{E}\{g(\vec{u}) \mid z,a,\vec{m}\}=0$ a.s. iff $g(\vec{u})=0$ a.s. 
        \item[2.] $\mathbb{E}\{g(\vec{u}) \mid w,a\}=0$ a.s. iff $g(\vec{u})=0$ a.s. 
    \end{itemize}
\end{asn}
Just as Assumption~\ref{assump:completeness_var1},
Assumption~\ref{assump:completeness_var2} imposes distributional variation restrictions, so that distributions in \(\{p(Z \mid \vec{u}_i, a,\vec{m}): \vec{u}_i\}\) exhibit sufficient variation across values \(\vec{u}_i\) for each \((a,\vec{m})\), and distributions in \(\{p(W \mid u_i, a): \vec{u}_i\}\) exhibit sufficient variation across values \(\vec{u_i}\) for each \(a\).

We also require the following analogue to Assumption~\ref{assump:bridge_var1}.
\begin{asn}\label{assump:bridge_var2}
(Fredholm Equation Solutions) For each \((a, m)\),
\begin{enumerate}
    \item There is a function $b_1$ such that 
    \[p(y \mid z,a,\vec{m}) = \sum_{w} b_1(y,a,\vec{m},w)p(w \mid z,a,\vec{m})\]
    \item There is a function $b_0$ such that 
    \[p(\vec{m} \mid w,a) = \sum_{z} b_0(\vec{m},a,z)p(z \mid w,a)\]
\end{enumerate}
\end{asn}

Given these assumptions, we have the following result.

\begin{thm}\label{thm:id_set2}
Under Assumptions~\ref{assump:ci_var2}--\ref{assump:bridge_var2}, \(p(Y(a))\) is identified in Fig.~\ref{fig:primal_v2}
for every value $y$ as follows
\begin{align*}
p(Y(a)\!=\!y)
\!=\!\!
\sum_{\vec{m},a',w,z}
\!\!
b_1(y,a',\vec{m}, w)b_0(\vec{m}, a, z)p(w,z,a')
\end{align*}

\end{thm}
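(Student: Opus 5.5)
Starting from the $U$-adjusted front-door formula in Equation~\ref{eq:front-door2}, the plan is to convert each factor into an observed-data expression using one bridge function at a time, with completeness turning the observed Fredholm equations of Assumption~\ref{assump:bridge_var2} into statements that hold at the level of $\vec{U}$. The target expression is
\[
p(Y(a)=y)=\sum_{\vec{m},a',\vec{u}} p(y\mid a',\vec{m},\vec{u})\,p(\vec{m}\mid a,\vec{u})\,p(a',\vec{u}),
\]
which holds in Fig.~\ref{fig:primal_v2} because $\vec{U}$ together with the front-door structure satisfies the conditions of \citet{fulcher2020robust}. Here the proxies $W$ and $Z$ are non-descendants of $A$ in the relevant sense, so they need not enter.

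\textbf{Outcome factor.} Expand the left side of Assumption~\ref{assump:bridge_var2}.1 as
\[
p(y\mid z,a,\vec{m})=\sum_{\vec{u}} p(y\mid \vec{u},a,\vec{m})\,p(\vec{u}\mid z,a,\vec{m}),
\]
using Assumption~\ref{assump:ci_var2}.1. Expand the right side as
\[
\sum_{\vec{u}}\Big[\sum_w b_1(y,a,\vec{m},w)\,p(w\mid \vec{u},a)\Big]p(\vec{u}\mid z,a,\vec{m}),
\]
using Assumption~\ref{assump:ci_var2}.2 to drop $\vec{m}$ and $z$ from $p(w\mid\cdot)$. Completeness (Assumption~\ref{assump:completeness_var2}.1) then yields
\[
p(y\mid \vec{u},a,\vec{m})=\sum_w b_1(y,a,\vec{m},w)\,p(w\mid \vec{u},a).
\]

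\textbf{Mediator factor.} Expand Assumption~\ref{assump:bridge_var2}.2 by conditioning on $\vec{U}$. On the left, $p(\vec{m}\mid w,a)=\sum_{\vec{u}} p(\vec{m}\mid \vec{u},a)\,p(\vec{u}\mid w,a)$, since $W\ci \vec{M}\mid \vec{U},A$. On the right, $p(z\mid w,a)=\sum_{\vec{u}} p(z\mid \vec{u})\,p(\vec{u}\mid w,a)$, by Assumption~\ref{assump:ci_var2}.3, which gives $Z\ci W,A\mid\vec{U}$. Completeness (Assumption~\ref{assump:completeness_var2}.2) then yields
\[
p(\vec{m}\mid a,\vec{u})=\sum_z b_0(\vec{m},a,z)\,p(z\mid \vec{u}).
\]

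\textbf{Recombination.} Substitute both identities into the target expression to get
\[
\sum_{\vec{m},a',w,z,\vec{u}} b_1(y,a',\vec{m},w)\,b_0(\vec{m},a,z)\,p(w\mid\vec{u},a')\,p(z\mid\vec{u})\,p(a',\vec{u}).
\]
What remains is to show $p(w\mid\vec{u},a')\,p(z\mid\vec{u})\,p(a',\vec{u})=p(w,z,a',\vec{u})$. Since $Z\ci W,A\mid\vec{U}$ gives $p(z\mid\vec{u})=p(z\mid \vec{u},a',w)$, the product equals $p(z\mid\vec{u},a',w)\,p(w\mid\vec{u},a')\,p(a',\vec{u})=p(w,z,a',\vec{u})$. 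Summing over $\vec{u}$ gives $p(w,z,a')$, which is the claimed formula.

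\textbf{Main obstacle.} I expect the hardest part to be the bookkeeping of which conditional independences justify each dropped argument, together with checking that the completeness conditions are applied with the right conditioning sets. In particular, the mediator step requires completeness of $\vec{U}$ given $(W,A)$, and it needs $p(\vec{m}\mid\vec{u},w,a)=p(\vec{m}\mid\vec{u},a)$ even though Fig.~\ref{fig:primal_v2} contains the edge $Z\to\vec{M}$. This is fine because only $W\ci\vec{M}\mid\vec{U},A$ is used at that point. Some care with square-integrability of the bracketed differences is also needed before completeness can be invoked.
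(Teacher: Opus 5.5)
Your proposal is correct and matches the paper's proof essentially step for step: completeness turns the two Fredholm equations of Assumption~\ref{assump:bridge_var2} into the $\vec{U}$-level identities $p(y\mid\vec{u},a,\vec{m})=\sum_w b_1(y,a,\vec{m},w)\,p(w\mid\vec{u},a)$ and $p(\vec{m}\mid a,\vec{u})=\sum_z b_0(\vec{m},a,z)\,p(z\mid\vec{u})$, which are then substituted into Equation~\ref{eq:front-door2}. Your final collapse $p(w\mid\vec{u},a')\,p(z\mid\vec{u})\,p(a',\vec{u})=p(w,z,a',\vec{u})$ via $Z\ci W,A\mid\vec{U}$ is also exactly the paper's recombination step.
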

See Appendix~\ref{app:id_set2} for a proof.

This results replaces the first two terms in {Equation} \ref{eq:front-door2} bridge functions $b_0$ and $b_1$, with the resulting functional bearing close structural similarity to Equation \ref{eq:front-door2}. Assumptions~\ref{assump:completeness_var2} and~\ref{assump:bridge_var2} aid in this replacement by ensuring that
\begin{equation*}
p(y  \mid  \vec{u},a,\vec{m}, \cancel{z}) = \sum_{w} b_1(y,a,\vec{m},w)p(w  \mid  \vec{u}, a, \cancel{\vec{m}, z}),
\end{equation*}
and 
\begin{equation*}
p(\vec{m}  \mid  \vec{u},a, \cancel{w}) = \sum_{z} b_0(\vec{m}, a, z)p(z  \mid  \vec{u}, \cancel{w,a}),
\end{equation*}
where the independences on the left and right hand sides follow by Assumption~\ref{assump:ci_var2}.
With these replacements, summing over \(\vec{U}\) in Equation~\ref{eq:front-door2} eliminates its appearance from the expression.

In finite-support settings, solutions to the integral equations in Assumption~\ref{assump:bridge_var2} can be constructed easily under invertibility of relevant probability matrices; see Appendix~\ref{app:bridge_set2}. These invertibility conditions state that distributions in $\{p(W \mid z,a,\vec{m}): z \in \mathcal{Z}\}$ exhibit sufficient variation across values $z$ in its support $\mathcal{Z}$ for each $(a,\vec{m})$, and that distributions in $\{p(Z \mid w,a): w \in \mathcal{W}\}$ likewise vary sufficiently across values $w$ for each $a$. If the cardinalities of \(Z\) and \(W\) are misaligned so that these invertibility conditions fail, a coarsened version of the variables may be employed instead.

\subsection{Assumption Set 3}\label{subsec:id_set3}

The third identification strategy assumes conditional independence relations in Assumption~\ref{assump:ci_kp}, displayed via m-separation in Fig.~\ref{fig:primal_v3}, and Assumptions~\ref{assump:bounded}--\ref{assump:distinctness_var3}.




\begin{asn}\label{assump:ci_kp} 

(Conditional Independences)
\begin{itemize}[noitemsep]
\item[1.] \(W\), \(Z\), and \(Y\) are mutually independent \(\mid \vec{M},A,\vec{U}\)
\item[2.] $W \ci \vec{M}, A \mid \vec{U}$
\end{itemize}
\end{asn}
These independence assumptions can be displayed graphically via m-separation in the variant of the composite bow graph shown in Fig.~\ref{fig:primal_v3}.

\begin{asn}\label{assump:bounded}
(Bounded Density)
Random variables $(A,\vec{M},Y,W,Z,\vec{U})$ admit a joint density that is bounded. 
\end{asn}


\begin{asn}\label{assump:completeness_var3}
(Completeness) For each \((a,\vec m)\) and any square integrable \(g\), 
\begin{itemize}[noitemsep]
\item[1] $\mathbb{E}\{g(\vec{u}) \mid z,a,\vec m\}=0$ a.s. iff $g(\vec{u})=0$ a.s. 
\item[2] $\mathbb{E}\{g(\vec{u}) \mid w,a,\vec m\}=0$ a.s. iff $g(\vec{u})=0$ a.s.
\end{itemize}
\end{asn}

\begin{asn}\label{assump:distinctness_var3}
(Distinctness)
For each \((a,\vec m)\) and \(i \neq j\), \(p(Y \mid \vec{u}_i, a, \vec{m})\) and \(p(Y \mid \vec{u}_j, a, \vec{m})\) differ with positive probability under the marginal distribution of \(Y\).
\end{asn}

Assumption~\ref{assump:completeness_var3} imposes distributional variation restrictions, so that distributions in \(\{p(Z \mid \vec{u}_i): \vec{u}_i,a,\vec{m}\}\) exhibit sufficient variation across values \(\vec{u}_i\) for each \((a,\vec{m})\), and that distributions in \(\{p(W \mid u_i,a,\vec{m}): \vec{u}_i\}\) exhibit sufficient variation across values \(\vec{u_i}\) for each \((a,\vec{m})\). Assumption~\ref{assump:distinctness_var3} states that distributions in \(\{p(Y \mid \vec{u}_i, a,\vec{m}): \vec{u}_i\}\) differ across values \(\vec{u}_i\).

\begin{thm}\label{thm:id_set3}
Under Assumptions~\ref{assump:ci_kp}-\ref{assump:distinctness_var3}, \[\{ p(A,\vec{M},Y,\vec{u}_i,W,Z): \vec{u}_i \} \] is identified up to labeling \(i\). Hence, the counterfactual distribution in Figure~\ref{fig:primal_v3} is identified via Equation~\ref{eq:front-door2}.
\end{thm}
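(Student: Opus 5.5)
The plan is to reduce the statement to the Hu--Schennach (2008) eigendecomposition identification theorem for measurement-error models, applied separately within each stratum $(a,\vec m)$. This is the continuous analogue of the Kruskal/Kuroki--Pearl/Allman et al.\ tensor argument cited in the background.

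\textbf{Step 1: a three-way factorization within each stratum.} Fix $(a,\vec m)$. By part 1 of Assumption~\ref{assump:ci_kp}, the observed law factors as
\begin{equation*}
p(y,w,z \mid a,\vec m)=\int p(y\mid \vec u,a,\vec m)\,p(w\mid \vec u,a,\vec m)\,p(z,\vec u\mid a,\vec m)\,d\vec u .
\end{equation*}
This is exactly the Hu--Schennach structure. The latent variable is $\vec U$. The two ``instruments'' are $Z$ and $W$, with the roles of repeated measurements. The ``eigenvalue'' variable is $Y$.

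\textbf{Step 2: build operators and a spectral identity.} For fixed $y$, define the integral operators
\begin{equation*}
L_{W\mid Z}:g\mapsto\int p(w\mid z,a,\vec m)\,g(z)\,dz, \qquad L_{y,W\mid Z}:g\mapsto\int p(y,w\mid z,a,\vec m)\,g(z)\,dz .
\end{equation*}
Also define $L_{W\mid U}$, $L_{U\mid Z}$, and the multiplication operator $D_y$ by $p(y\mid \vec u,a,\vec m)$. The factorization gives
\begin{equation*}
L_{y,W\mid Z}=L_{W\mid U}\,D_y\,L_{U\mid Z}, \qquad L_{W\mid Z}=L_{W\mid U}\,L_{U\mid Z}.
\end{equation*}
Assumption~\ref{assump:bounded} makes these Hilbert--Schmidt operators on $L^2$. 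Assumption~\ref{assump:completeness_var3} (parts 1 and 2) gives injectivity of $L_{W\mid U}$ and of the relevant adjoint of $L_{U\mid Z}$. One passes from $Z$-completeness to injectivity of $L_{U\mid Z}$ by Bayes' rule, as the paper remarks after Assumption~\ref{assump:completeness_var1}. Injectivity lets us invert on the appropriate ranges, yielding
\begin{equation*}
L_{y,W\mid Z}\,L_{W\mid Z}^{-1}=L_{W\mid U}\,D_y\,L_{W\mid U}^{-1}.
\end{equation*}
The left side depends only on observables.

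\textbf{Step 3: read off the eigenfunctions, which is where the main obstacle lies.} The identity above is an eigendecomposition of an observed operator. Its eigenvalues are $\{p(y\mid \vec u_i,a,\vec m)\}$ and its eigenfunctions are $\{p(w\mid \vec u_i,a,\vec m)\}$. The key step is showing this decomposition is unique. I expect this to be the hardest part, because in infinite dimensions the eigenspaces must be shown to be one-dimensional. My approach is to follow Hu--Schennach:
\begin{itemize}
\item The decomposition is unique for the family indexed by $y$ jointly.
\item Assumption~\ref{assump:distinctness_var3} ensures that two distinct $\vec u_i,\vec u_j$ cannot share an eigenvalue for every $y$ on a set of positive probability. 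Hence the common eigenfunctions are separated.
\item Eigenfunctions are densities, so the scale is fixed by normalizing each to integrate to one.
\end{itemize}
This leaves only the labeling of the $\vec u_i$ unidentified. One cannot fix a labeling (e.g., monotone functionals) without further assumptions, so the result is stated ``up to labeling $i$''. The labeling must also be made consistent across strata $(a,\vec m)$. For that I would use part 2 of Assumption~\ref{assump:ci_kp}: $p(w\mid\vec u)$ does not depend on $(a,\vec m)$, so eigenfunctions can be matched across strata.

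\textbf{Step 4: recover the joint law.} With $p(y\mid \vec u,a,\vec m)$ and $p(w\mid\vec u)$ in hand, solve
\begin{equation*}
p(w,z\mid a,\vec m)=\int p(w\mid\vec u)\,p(z,\vec u\mid a,\vec m)\,d\vec u
\end{equation*}
for $p(z,\vec u\mid a,\vec m)$, using injectivity of $L_{W\mid U}$. Multiplying by $p(a,\vec m)$ and by the identified conditionals gives $p(A,\vec M,Y,\vec u_i,W,Z)$ up to labeling. Marginalizing yields $p(y\mid a',\vec m,\vec u)$, $p(\vec m\mid a,\vec u)$ and $p(a',\vec u)$. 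Equation~\ref{eq:front-door2} is invariant to relabeling $\vec u$, since $\vec u$ is summed out. It therefore identifies $p(Y(a))$.
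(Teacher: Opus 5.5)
Your proposal is correct and follows the paper's proof step for step. Both apply Theorem~1 of \citet{hu2008instrumental} within each stratum $(a,\vec m)$, use $p(W\mid \vec u,a,\vec m)=p(W\mid \vec u)$ from Assumption~\ref{assump:ci_kp} to align labels across strata, and then plug the recovered components into the relabeling-invariant Equation~\ref{eq:front-door2}. Your operator sketch in Steps~2--3 simply unpacks the cited theorem, as the continuous analogue of the matrix argument in Appendix~\ref{app:id_set3_discrete}; the one loose remark is that bounded densities alone do not make these operators Hilbert--Schmidt on $L^2$, but nothing depends on it because the spectral uniqueness step is delegated to Hu--Schennach.
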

See Appendix~\ref{app:id_set3} for a proof, and Appendix~\ref{app:id_set3_discrete} for an example in a finite-support model that illustrates the main idea. When $U$ has finite support, Assumptions~\ref{assump:ci_kp}--\ref{assump:distinctness_var3} uniquely determine the size of the support of \(U\) \citep{Derksen2013}. A corresponding result holds in real-valued continuous models \citep{hu2008instrumental}. Note that while the model depicted in Fig.~\ref{fig:primal_v3} is nested within those depicted in Fig.~\ref{fig:primal_v1} and~\ref{fig:primal_v2}, the above theorem allows identification of the entire full law, and thus targets of inference other than the ACE.

This identification strategy extends eigendecomposition-based methods for causal 
identification under hidden confounding from \citet{kuroki2014measurement} and \citet{allman2015parameter} by incorporating identification arguments developed by \citet{hu2008instrumental} for continuous state spaces.

%% file: K_Estimation.tex
\section{Estimation}\label{sec:estimation}

Having described identifying functionals in Theorems \ref{thm:id_set1},\ref{thm:id_set2}, and \ref{thm:id_set3}, we now provide plug-in estimation strategies for all three functionals, and an influence function-based estimator for the third functional.

\subsection{Consistent Estimation under Assumption Sets 1, 2, and 3}\label{subsec:est_set123}

We construct 
$\hat \psi^{(S1, PI)},\hat \psi^{(S2, PI)},\hat \psi^{(S3, PI)}$,
the plug-in estimators for the counterfactual mean $\psi_a = \mathbb{E}[Y(a)]$ under Assumption Sets 1, 2, and 3, respectively,
which are consistent under standard regularity conditions. 

Under Assumption Sets 1 and 2, estimation proceeds by first estimating a mean-induced bridge (corresponding to $Y$-marginalized solution(s) of integral equation(s) in Equation~\ref{eq:mean_bridge1} and~\ref{eq:mean_bridge2} in Appendix~\ref{app:plug_in_estimation}), and then taking a sample average over the empirical distribution. Under Assumption Set 3, estimation proceeds by plugging in estimators of each distributional component in Equation~\ref{eq:front-door2}, obtained by solving a linear system of equations by matrix inversion. Further details for these estimators can be found in Appendix~\ref{app:plug_in_estimation}.

\subsection{Influence Function-based Estimation under Assumption Set 3}\label{subsec:est_set3}

Recall by Theorem~\ref{thm:id_set3}, under Assumptions~\ref{assump:ci_kp}-\ref{assump:distinctness_var3}, \[\{ p(A,\vec{M},Y,\vec{u}_i,W,Z): \vec{u}_i \} \] is identified up to labeling \(i\), giving the counterfactual distribution in Fig.~\ref{fig:primal_v3} via Equation~\ref{eq:front-door2}.

We propose an influence function-based estimator $\hat\psi^{(S3,IF)}_a$ of the counterfactual mean $\psi_a = \mathbb{E}[Y(a)]$. We now briefly review estimation theory in semiparametric models that we use to construct this estimator.

Let $P$ denote the data-generating distribution, $\psi(P)$ be the parameter of interest, $\eta$ be the collection of nuisance functions indexing $P$, and $O$ denote observed variables. An \emph{influence function} $\varphi(O;\psi(P),\eta_0 = \eta)$ for $\psi(P)$ at $P$ satisfies
\begin{equation}\label{eq:unbiased_correct}
\mathbb{E}_{P}\!\left[\varphi\!\left(O;\psi(P),\eta_0\right)\right] = 0 ,
\end{equation}
and for every regular parametric submodel $\{P_\epsilon:\epsilon\in\mathbb{R}\}$ through $P$ with $P_0=P$ and score
\begin{align*}
s_\epsilon(O)
:=
\left.\frac{d}{d\epsilon}\log p_\epsilon(O)\right|_{\epsilon=0},
\end{align*}
it holds that
\begin{equation}\label{eq:if_def_correct}
\left.\frac{d}{d\epsilon}\psi(P_\epsilon)\right|_{\epsilon=0}
=
\mathbb{E}_{P}\!\left[\varphi\!\left(O;\psi(P),\eta_0\right)\, s_\epsilon(O)\right]
\end{equation}
\citep{tsiatis06missing, KennedyTutorial}.
An influence function-based estimator \(\hat \psi\) solves estimating equation 
\begin{equation}\label{eq:est_equation}
\mathbb{E}_n[\varphi(O;\hat\psi, \hat \eta)] = 0
\end{equation}
These estimators can attain $\sqrt{n}$-consistency for $\psi_a$ despite nuisance functions \(\eta\) being estimated at slower than $\sqrt{n}$ rates \citep{chernozhukov_2018,KennedyTutorial}.

In addition to this desirable property, the estimator we derived, which we denote \(\hat \psi^{(S3,IF)}_a\), can exhibit \emph{multiple robustness}, meaning it can remain consistent for $\psi_a$ under misspecification of certain collections {of} nuisance functions (Theorem~\ref{thm:multiple_robustness}).  These properties extend to causal effect estimation (via \(\hat \psi^{(S3,IF)} = \hat \psi^{(S3,IF)}_1 - \hat \psi^{(S3,IF)}_0\)).

Our strategy for obtaining this estimator is to start with the nonparametric full data influence function (e.g. assuming the confounder $U$ is observed) for the counterfactual mean $\psi_a = \mathbb{E}[Y(a)]$, and then transform this object to account for the fact $U$ is not, in fact, observed.

\subsection{
The Full Data Influence Function
}

In the case where \(A,\vec{M},\vec{Y}, \vec{U}\) 
in Fig.~\ref{fig:primal_v0} 
are observed,
\cite{fulcher2020robust} derived the influence function for $\psi_a$, \[\varphi_{full}((A,\vec{M},\vec{Y}, \vec{U}); \cdot)\] shown in Equation~\ref{eq:full_eif}.
\begin{align}\label{eq:full_eif}
\varphi_{full}(.)
&=
\underbrace{
\frac{p(\vec{M} \mid A=a, \vec{U})}{p(\vec{M} \mid A, \vec{U})}
\Big\{
  Y - \mu(\vec{M}, A, \vec{U})
\Big\}
}_{\varphi_{full_1}(Y,\vec{M},A,\vec{U})}
\\
\notag
&+
\underbrace{
\frac{\mathbb{I}(A = a)}{\pi(A \mid \vec{U})}
\Big\{
  \xi(\vec{M}, \vec{U}) - \theta(\vec{U})
\Big\}
}_{\varphi_{full_2}(\vec{M},A,\vec{U})}
+\hspace{-0.2cm}
\underbrace{
  \eta(A,\vec{U}) 
}_{\varphi_{full_3}(A,\vec{U})} \hspace{-0.1cm}- \psi_{a},
\end{align}
where
\[
\mu(\vec{M},A,\vec{U})
:= \mathbb{E}[Y \mid \vec{M}, A, \vec{U}],
\hspace{0.2cm}
\pi(A \mid \vec{U})
:= p(A \mid \vec{U}),
\]
\[
\xi(\vec{M}, \vec{U})
:= \sum_{a=0}^1 \mu(\vec{M}, a, \vec{U})\,\pi(a \mid \vec{U}),
\]
\[
\eta(A, \vec{U})
:= \sum_{\vec{m}} \mu(\vec{m}, A, \vec{U})\, p(\vec{m} \mid a, \vec{U})\,
\]
\[
\theta(\vec{U})
:= \sum_{\vec{m}} \xi(\vec{m}, \vec{U})\, p(\vec{m} \mid a, \vec{U})\,
\]
After introducing \(W\) and \(Z\) (as in Fig.~\ref{fig:primal_v3}), 
\(\varphi_{\mathrm{full}}(\cdot)\) from Equation~\ref{eq:full_eif} remains an influence function for \(\psi_a\) in the fully observed model (where \(A, \vec{M}, \vec{Y}, \vec{U}, W, Z\) are observed). See Appendix~\ref{app:full_if} for details.

\subsection{
The Observed Data Influence Function
}

Under Assumptions~\ref{assump:ci_kp}-\ref{assump:distinctness_var3} in the model with hidden $\vec{U}$ in Fig.~\ref{fig:primal_v3}, we construct an influence function for \(\psi_a\),
\[
\varphi_{\mathrm{obs}}((A,\vec{M},Y,W,Z);\psi,\eta)
\] 
shown in Equation~\ref{eq:obs_if}.

Our construction exploits the fact that the full data distribution, and thus any functional of it, such as
\(
\varphi_{\mathrm{full}}(A,\vec{M},Y,\vec{U}=\vec{u}_i; \cdot),
\)
is identified up to a relabeling of the latent states \(\vec{u}_i\), and can be estimated. We express \(\varphi_{\mathrm{obs}}((A,\vec{M},Y,W,Z);\cdot)\) as a weighted combination of the terms in \(\varphi_{\mathrm{full}}(A,\vec{M},Y,\vec{U}=\vec{u}_i; \cdot)\) across latent states. The weights are constructed from observed variables and their moments conditional on values of $\vec{U}$. 

\subsubsection{A Simple Example: Binary $U$, Real-Valued $W,Z$}

To illustrate the construction strategy for the weights in our influence function,
consider the simple case of binary $U$ and real-valued $W,Z$. Define for $i \neq j$,
\begin{align*}
&\mathcal{C}_{WZ}(u_i,a)
=
\frac{W-\mathbb{E}[W \mid u_j]}{\mathbb{E}[W \mid u_i]-\mathbb{E}[W \mid u_j]} \times\\
&\frac{Z-\mathbb{E}[Z \mid u_j,a]}{\mathbb{E}[Z \mid u_i,a]-\mathbb{E}[Z \mid u_j,a]}. 
\end{align*}
Define \(\mathcal{C}_{YW}(u_i,a,\vec{m})\) and \(\mathcal{C}_{YZ}(u_i,a,\vec{m})\) similarly. 
For instance, 
\begin{align*}
&\mathcal{C}_{YW}(u_i,a,\vec{m})
=
\frac{Y-\mathbb{E}[Y \mid u_j,a,\vec{m}]}{\mathbb{E}[Y \mid u_i,a,\vec{m}]-\mathbb{E}[Y \mid u_j,a,\vec{m}]}
\,\times \\
& \qquad
\frac{W-\mathbb{E}[W \mid u_j]}{\mathbb{E}[W \mid u_i]-\mathbb{E}[W \mid u_j]},  i \neq j.
\end{align*}
Lastly, define \(\mathcal{C}_{WZY}(u_i,a,\vec{m})\) as follows:
\begin{align*}
&\mathcal{C}_{YWZ}(u_i,a,\vec{m})
=
\frac{Y-\mathbb{E}[Y \mid u_j,a,\vec{m}]}{\mathbb{E}[Y \mid u_i,a,\vec{m}]-\mathbb{E}[Y \mid u_j,a,\vec{m}]}
\,\times \\
& \qquad
\frac{W-\mathbb{E}[W \mid u_j]}{\mathbb{E}[W \mid u_i]-\mathbb{E}[W \mid u_j]}
\frac{Z-\mathbb{E}[Z \mid u_j,a]}{\mathbb{E}[Z \mid u_i,a]-\mathbb{E}[Z \mid u_j,a]},  i \neq j.
\end{align*}

These weights require that conditional means in weight denominators be distinct, not already implied by Assumption~\ref{assump:completeness_var3}. Minor discrepancies between identification and estimation assumptions are common; e.g., augmented inverse probability weighting assumes a discrete treatment for effect estimation, when it is not required for identification.



Given these weights, which may be viewed as \emph{Lagrange interpolating polynomials,}
Theorem~\ref{thm:obs_if} in the next section gives the desired influence function.

\subsubsection{The General Case: Arbitrary Hidden $\vec{U}$ with Finite Support}

The simple Lagrange interpolating polynomial weights described in the previous section may be generalized, for a latent variable \(\vec{U}\) with
finitely many levels, to weights  
\(
\mathcal{C}_{WZ}(\vec{u}_i,a)\), \(\mathcal{C}_{YW}(\vec{u}_i,\!a,\!\vec{m})\), \(\mathcal{C}_{YZ}(\vec{u}_i,\!a,\!\vec{m})\), 
and \(\mathcal{C}_{YWZ}(\vec{u}_i,\!a,\!\vec{m})\)
described in Appendix~\ref{app:proxy_contrasts}. These weights are constructed to satisfy Equation~\ref{eqn:weight_prop1} and~\ref{eqn:weight_prop2}.
\begin{equation}\label{eqn:weight_prop1}
\begin{aligned}
&\text{ For each } X \in \{W,Z\}, \\
&\mathbb{E}\big[
\mathcal{C}_{WZ}(A,\vec{u}_i) \mid Y, \vec{M}, A, \vec{u}_i,X\big] =  k,  \quad k \text{ is a constant}; \\
&\mathbb{E}\big[
\mathcal{C}_{WZ}(A,\vec{u}_i) \mid Y, \vec{M}, A, \vec{u}_j,X\big] =  0, \quad  \vec{u}_i \neq \vec{u}_j.
\end{aligned}
\end{equation}

\begin{equation}\label{eqn:weight_prop2}
\begin{aligned}
&\text{For each } X \in \{W,Z,Y\},\\
&\mathbb{E}\big[
\mathcal{C}_{WZ}(A,\vec{u}_i) + \mathcal{C}_{YW}(\vec{M},A,\vec{u}_i) + \mathcal{C}_{YZ}(\vec{M},A,\vec{u}_i) \\
&-2 \mathcal{C}_{WZY}(\vec{M},A,\vec{u}_i)\mid \vec{M}, A, \vec{u_i},X\big] =  1; \\
&\mathbb{E}\big[
\mathcal{C}_{WZ}(A,\vec{u}_i) + \mathcal{C}_{YW}(\vec{M},A,\vec{u}_i) + \mathcal{C}_{YZ}(\vec{M},A,\vec{u}_i) \\
&-2 \mathcal{C}_{WZY}(\vec{M},A,\vec{u}_i)\mid \vec{M}, A,\vec{u_j},X\big] =  0, \quad \vec{u}_i \neq \vec{u}_j.
\end{aligned}
\end{equation}

Developing the analogous construction of these weights beyond the setting where \(U\) has finite support remains an open problem, which we leave for future work. Nevertheless, if Equations~\ref{eqn:weight_prop1} and~\ref{eqn:weight_prop2} hold, we have the following result.

\begin{thm}\label{thm:obs_if} In the model 
in Fig.~\ref{fig:primal_v3} (where $\vec{U}$ is unobserved), under Assumptions~\ref{assump:ci_kp}--\ref{assump:distinctness_var3}, a nonparametric observed data influence function for $\psi_a$ is given by
\begin{align}\label{eq:obs_if}
\notag
&\varphi_{\mathrm{obs}}\!\big((A,\vec{M},Y,W,Z);\psi_a,\eta\big) \\
\notag
&= \sum_{i}
\mathcal{C}_{WZ}(A,\vec{u}_i)\,
\varphi_{\mathrm{full}_1}(Y,\!\vec{M}\!,\!A,\!\vec{u}_i) + 
\sum_i
\Big\{
\mathcal{C}_{WZ}(A,\vec{u}_i) \\
\notag
&+\mathcal{C}_{YW}(\vec{M}\!,\!A,\!\vec{u}_i)  +
\mathcal{C}_{YZ}(\vec{M}\!,\!A,\!\vec{u}_i) -2\,\mathcal{C}_{WZY}(\vec{M}\!,\!A,\!\vec{u}_i)
\Big\} \\
&\cdot \Big\{
\varphi_{\mathrm{full}_2}(\vec{M}\!,\!A,\!\vec{u}_i) + \varphi_{\mathrm{full}_3}(A,\!\vec{u}_i)
\Big\} - \psi_{a},
\end{align}
 where the terms \(\varphi_{\mathrm{full}_1}(Y,\vec{M},A,\vec{u}_i)\), \(\varphi_{\mathrm{full}_2}(\vec{M},A,\vec{u}_i)\), and \(\varphi_{\mathrm{full}_3}(A,\vec{u}_i)\) are defined in Equation~\ref{eq:full_eif}, and the weights \(
\mathcal{C}_{WZ}(A,\vec{u}_i)\), \(\mathcal{C}_{YW}(\vec{M},A,\vec{u}_i)\), \(\mathcal{C}_{YZ}(\vec{M},A,\vec{u}_i)\), and 
\(\mathcal{C}_{YWZ}(\vec{M},A,\vec{u}_i)\) are constructed to satisfy Equation~\ref{eqn:weight_prop1} and~\ref{eqn:weight_prop2}.
\end{thm}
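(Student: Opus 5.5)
The plan is to check the two defining properties of an influence function from the estimation section. Mean zero is Equation~\ref{eq:unbiased_correct}. The pathwise derivative identity is Equation~\ref{eq:if_def_correct}, which must hold for every regular submodel of the observed law $p(A,\vec M,Y,W,Z)$. The approach is a ``lifting'' argument. Every observed-data submodel $P_\epsilon$ corresponds, through Theorem~\ref{thm:id_set3}, to a full-data submodel $\tilde P_\epsilon$ in the model of Fig.~\ref{fig:primal_v3}, identified up to labeling of the states $\vec u_i$. The functional $\psi_a$ is the same function of $\tilde P_\epsilon$, given by Equation~\ref{eq:front-door2}. Since $\varphi_{\mathrm{full}}$ is an influence function in the fully observed model (Appendix~\ref{app:full_if}), we have
\[
\tfrac{d}{d\epsilon}\psi_a(P_\epsilon)\big|_{0}=\mathbb E[\varphi_{\mathrm{full}}(A,\vec M,Y,\vec U)\,\tilde s_\epsilon(A,\vec M,Y,W,Z,\vec U)].
\]
Here $\tilde s_\epsilon$ is the full-data score. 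The observed score satisfies $s_\epsilon=\mathbb E[\tilde s_\epsilon\mid O]$. It therefore suffices to show that $\varphi_{\mathrm{obs}}+\psi_a$ reproduces this inner product. The cleanest sufficient condition is
\[
\mathbb E[\varphi_{\mathrm{obs}}\mid A,\vec M,Y,\vec U]=\varphi_{\mathrm{full}}
\]
whenever it is paired against scores lying in the full-data tangent space of the model.

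The proof will proceed term by term, using the factorization of Assumption~\ref{assump:ci_kp}: $W,Z,Y$ are mutually independent given $(\vec M,A,\vec U)$, and $W\ci \vec M,A\mid\vec U$. The full-data score decomposes as
\[
s(\vec U)+s(A\mid\vec U)+s(\vec M\mid A,\vec U)+s(Y\mid \vec M,A,\vec U)+s(W\mid\vec U)+s(Z\mid \vec M,A,\vec U).
\]
The first group of terms is $\sum_i\mathcal C_{WZ}(A,\vec u_i)\varphi_{\mathrm{full}_1}(Y,\vec M,A,\vec u_i)$. Conditioning on $(Y,\vec M,A,\vec U=\vec u_j,X)$ for $X\in\{W,Z\}$, Equation~\ref{eqn:weight_prop1} kills every summand with $i\neq j$ and leaves a multiple of $\varphi_{\mathrm{full}_1}(\cdot,\vec u_j)$. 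This handles the $\vec U,A,\vec M,Y$ score components. It also handles the $W$- and $Z$-score components, because $\varphi_{\mathrm{full}_1}$ already has conditional mean zero given $(\vec M,A,\vec U)$. The bracketed combination $\mathcal C_{WZ}+\mathcal C_{YW}+\mathcal C_{YZ}-2\mathcal C_{WZY}$ multiplies $\varphi_{\mathrm{full}_2}+\varphi_{\mathrm{full}_3}$, which depend only on $(\vec M,A,\vec U)$. By Equation~\ref{eqn:weight_prop2}, conditioning on $(\vec M,A,\vec u_j,X)$ for each $X\in\{W,Z,Y\}$ returns exactly $\varphi_{\mathrm{full}_2}+\varphi_{\mathrm{full}_3}$ at $\vec u_j$. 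The condition is required for each single $X$, and not only jointly, because the score components $s(W\mid\vec U)$, $s(Z\mid\cdot)$, $s(Y\mid\cdot)$ each depend on one of these variables. Under the mutual independence in Assumption~\ref{assump:ci_kp}, the covariance with each is then computed by conditioning on that variable alone. With this, summing over $j$ against $p(\vec u_j)$ gives $\mathbb E[\varphi_{\mathrm{obs}}s_\epsilon]=\mathbb E[\varphi_{\mathrm{full}}\tilde s_\epsilon]$. Mean zero follows in the same way, from $\mathbb E[\varphi_{\mathrm{full}}]=0$ and the weights' property with a trivial score.

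Two technical steps remain. First, the constant $k$ in Equation~\ref{eqn:weight_prop1} must equal $1$ after normalization; I would check this from the Lagrange-interpolation construction in Appendix~\ref{app:proxy_contrasts}. Second, the estimated components inside the weights and the $\varphi_{\mathrm{full}}$ terms must be labeled consistently. Labels are only identified up to permutation, but $\varphi_{\mathrm{obs}}$ sums over all $i$ and is therefore permutation invariant. Nonparametric (saturated) status follows because the derivative identity holds for all observed-data scores, under Assumptions~\ref{assump:bounded}--\ref{assump:distinctness_var3}. These assumptions make the map from the observed law to the full law locally well defined, so every observed submodel lifts.

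I expect the main obstacle to be the cross terms in the $Y$-dependent pieces. The weights $\mathcal C_{YW},\mathcal C_{YZ},\mathcal C_{WZY}$ involve $Y$, and when they are multiplied against scores such as $s(Y\mid\cdot)$ together with $s(W\mid\cdot)$, products of centered proxies appear. Equation~\ref{eqn:weight_prop2} controls these only one variable at a time. The plan there is to exploit the specific linear-in-each-variable structure of the Lagrange weights. Each weight is a product of terms that have conditional mean zero or one given $\vec u$, and mutual independence then lets the expectation of such a product against an additive score factor into single-variable pieces. This is exactly where the coefficient $-2$ is needed to cancel the double counting among the $WZ$, $YW$ and $YZ$ terms.
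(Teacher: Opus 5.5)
Your proposal is correct and follows essentially the same route as the paper. Both relate observed to full-data submodels so that $s_\epsilon(O)=\mathbb{E}[\tilde s_\epsilon\mid O]$, use that $\varphi_{\mathrm{full}}$ is a full-data influence function, split the full-data score via Assumption~\ref{assump:ci_kp}, and match the conditional expectations of $\varphi_{\mathrm{obs}}$ and $\varphi_{\mathrm{full}}$ given $(A,\vec M,Y,\vec U)$, $(\vec M,A,\vec U,W)$ and $(\vec M,A,\vec U,Z)$ through Equations~\ref{eqn:weight_prop1}--\ref{eqn:weight_prop2}. Your remark that the constant $k$ must equal $1$ is right; the paper uses this implicitly. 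The cross-term ``obstacle'' you anticipate does not arise, because $\mathbb{E}[\varphi_{\mathrm{obs}}\tilde s_\epsilon]$ is linear in the score and each score component involves only one of $W,Z,Y$, so the single-variable conditions already close the argument. Finally, drop the saturation claim: the gradient identity does not make the observed model saturated, and with finite-support $\vec U$ that model carries rank restrictions.
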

The proof appears in Appendix~\ref{app:obs_if}.


As is the case for many influence function based estimators, the estimator \(\hat \psi^{(S3,IF)}_a\) obtained by solving Equation~\ref{eq:est_equation} using the influence function in Equation~\ref{eq:obs_if} can exhibit multiple robustness.  Specifically, we have the following result (proved in Appendix~\ref{app:asymp_normality}).

\begin{thm}\label{thm:multiple_robustness}
Let \(\varphi(O;\psi_a,\eta)\) be defined as in Theorem~\ref{thm:obs_if}. 
The estimator \(\hat{\psi}^{(S3,IF)}_a\), obtained as the solution to 
Equation~\ref{eq:est_equation}, is consistent for \(\psi_a\) if any one of the 
nuisance sets (1)--(5) is correctly specified.
\begin{itemize}[noitemsep]
  \item[1.] \(p(\vec{M} \mid A, \vec{U})\), \(\{\mathbb{E}[h_j(W) \mid \vec{U}]\}_j\), \(\{\mathbb{E}[f_j(Z) \mid \vec{U},A]\}_j\);  
  \item[2.] \(p(\vec{M} \mid A, \vec{U})\), \(\{\mathbb{E}[h_j(W) \mid \vec{U}]\}_j\), \\ \(\{\mathbb{E}[t_j(Y) \mid \vec{M}, A, \vec{U}]\}_j\);  
  \item[3.] \(f(\vec{M} \mid A, \vec{U})\), \(\{\mathbb{E}[f_j(Z) \mid \vec{U},A]\}_j\), \\ \(\{\mathbb{E}[t_j(Y) \mid \vec{M},A,\vec{U}]\}_j\);
  \item[4.] \(\{\mathbb{E}[t_j(Y) \mid \vec{M},A,\vec{U}]\}_j\), \(\pi(A \mid \vec{U})\), \(\{\mathbb{E}[h_j(W) \mid \vec{U}]\}_j\);  
  \item[5.] \(\{\mathbb{E}[t_j(Y) \mid \vec{M},A,\vec{U}]\}_j\), \(\pi(A \mid \vec{U})\), \(\{\mathbb{E}[f_j(Z) \mid \vec{U},A]\}_j\),
\end{itemize}
where $\{h_j(W)\}_j$, $\{f_j(Z)\}_j$, and $\{t_j(Y)\}_j$ are collections defined in Appendix~\ref{app:proxy_contrasts}. By definition, these collections include the identity functions on \(W\), \(Z\), and \(Y\). When \(\vec U\) is binary, the collections reduce to the identity functions.
\end{thm}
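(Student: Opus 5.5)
We will prove multiple robustness by showing that the population estimating function has mean zero at the true $\psi_a$ whenever any one of the five nuisance sets is correctly specified. Consistency of $\hat\psi^{(S3,IF)}_a$ then follows by standard M-estimation arguments, since $\varphi_{\mathrm{obs}}$ is linear in $\psi_a$ with coefficient $-1$. Let $\tilde\eta$ denote the probability limits of the nuisance estimators. We aim to show that
\[
\mathbb{E}_P\big[\varphi_{\mathrm{obs}}(O;\psi_a,\tilde\eta)\big]=0
\]
under each of scenarios (1)--(5).

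The plan is to condition on the latent $\vec U$ (together with $A,\vec M$) and push expectations inside. This uses the conditional independences of Assumption~\ref{assump:ci_kp} ($W,Z,Y$ mutually independent given $\vec M,A,\vec U$, and $W\ci \vec M,A\mid \vec U$). Under these independences, each weight $\mathcal{C}$ factorizes into products of Lagrange-type factors in $h_j(W)$, $f_j(Z)$ and $t_j(Y)$. Their conditional means given $\vec u_j$ depend only on the true moments $\mathbb{E}[h_j(W)\mid\vec U]$, $\mathbb{E}[f_j(Z)\mid \vec U,A]$ and $\mathbb{E}[t_j(Y)\mid\vec M,A,\vec U]$, compared with the working moments plugged into the weights. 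Whenever a factor's working moments are correct, that factor's conditional expectation equals the Kronecker delta $\mathbb{I}(\vec u_i=\vec u_j)$; this is the mechanism behind Equations~\ref{eqn:weight_prop1}--\ref{eqn:weight_prop2}.

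\textbf{Scenarios 1--3 (two proxy moments and $p(\vec M\mid A,\vec U)$ correct).} In each of these, two of the three families of proxy moments are correct. Take scenario 1 first, where the $W$ and $Z$ moments are correct. Then $\mathcal{C}_{WZ}$ has conditional mean $\mathbb{I}(i=j)$ given $Y,\vec M,A,\vec u_j$. In the combination $\mathcal{C}_{WZ}+\mathcal{C}_{YW}+\mathcal{C}_{YZ}-2\mathcal{C}_{WZY}$, the $Y$-factor cancels: with $T$ the $Y$-factor and $\delta$ the $W$ or $Z$ factor, $\mathbb{E}[\delta_W\delta_Z+T\delta_W+T\delta_Z-2T\delta_W\delta_Z]$ reduces to $\mathbb{I}(i=j)$ even if $T$ is wrong. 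This collapses the sum over $i$ to evaluation at the true $\vec U$, so the expectation becomes
\[
\mathbb{E}\big[\varphi_{\mathrm{full}_1}+\varphi_{\mathrm{full}_2}+\varphi_{\mathrm{full}_3}\big](\tilde\eta)-\psi_a .
\]
Here $\tilde\eta$ has correct $p(\vec M\mid A,\vec U)$, which is one of the known double-robustness conditions for the Fulcher et al.\ front-door influence function (Equation~\ref{eq:full_eif}), so this quantity is zero.

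For scenarios 2 and 3, only one of the $W,Z$ factors is correct, so the $WZ$-weight alone does not reduce to a delta. The key algebraic step is then to check that the combined weight still reduces to $\mathbb{I}(i=j)$. For example, when $\delta_W$ and $T$ are correct, the combination gives $\delta_Z+1\cdot\delta_W\ldots$ terms that telescope. This is exactly why the combination with coefficient $-2$ was chosen. For the $\varphi_{\mathrm{full}_1}$ term, which carries only $\mathcal{C}_{WZ}$, we use that $\varphi_{\mathrm{full}_1}$ has conditional mean zero given $\vec M,A,\vec U$ when $\mu$ (equivalently the $Y$-moments) is correct. Alternatively, its weight's erroneous parts multiply a residual whose mean vanishes.

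\textbf{Scenarios 4--5 ($Y$-moments and $\pi$ correct).} Here $\mu$ is correct, so $\varphi_{\mathrm{full}_1}$ has zero conditional mean regardless of its weight. The combined weight again reduces to a delta because $T$ and one of the $W$ or $Z$ factors are correct. The remaining full-data terms $\varphi_{\mathrm{full}_2}+\varphi_{\mathrm{full}_3}$ have mean $\psi_a$ when $\mu$ and $\pi$ are correct, which is the other robustness condition of the full-data influence function.

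\textbf{Main obstacle.} The main obstacle is the weight bookkeeping. We must verify, for general finite-support $\vec U$ with the collections $h_j,f_j,t_j$ of Appendix~\ref{app:proxy_contrasts}, that each partially correct set of factors still yields the delta property, or at least a bias that is multiplied by a mean-zero residual. I would first do this explicitly for binary $U$ using the identity $\mathbb{E}[(1-\delta_W)(1-\delta_Z)(1-T)]$-type expansions, and then lift it to the general case via the product structure of Lagrange polynomials.
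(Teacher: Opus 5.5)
Your proposal follows essentially the same route as the paper's proof. You condition on the latent label and use Assumption~\ref{assump:ci_kp} to factor the weights, so that correctly specified moment families give Kronecker-delta conditional means. You then collapse $\mathcal{C}_{WZ}+\mathcal{C}_{YW}+\mathcal{C}_{YZ}-2\mathcal{C}_{WZY}$ to $\mathbb{I}(\vec u_i=\vec u_k)$ whenever two of the three families are correct, and finish with the robustness of Fulcher et al.'s full-data influence function. Your explicit majority-type identity $xy+xt+yt-2xyt=d$ is a cleaner justification of that collapse than the paper's bare assertion.

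One step needs fixing. In scenarios 2--5 the $\varphi_{\mathrm{full}_1}$ term does not vanish ``regardless of its weight,'' because $\varphi_{\mathrm{full}_1}(Y,\vec M,A,\vec u_i)$ is evaluated at the label $\vec u_i$. On $\{\vec U=\vec u_k\}$ with $k\neq i$ its conditional mean is $\frac{\tilde p(\vec M\mid a,\vec u_i)}{\tilde p(\vec M\mid A,\vec u_i)}\{\mu(\vec M,A,\vec u_k)-\mu(\vec M,A,\vec u_i)\}$, which is generally nonzero. So you must first use the correctly specified $W$- or $Z$-family to get $\mathbb{E}[\mathcal{C}_{WZ}(A,\vec u_i)\mid Y,\vec M,A,\vec u_k]=0$ off the diagonal, and only then invoke correctness of $\mu$ on the diagonal; this is the order the paper uses.
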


Moreover, \(\hat \psi^{(S3,IF)}_a\) is asymptotically normal and \(\sqrt{n}\)-consistent despite slower than parametric rates of convergence for nuisance functions. We have the following result (proved in Appendix~\ref{app:asymp_normality}).

\begin{thm}\label{thm:asymp_normality}
Let \(\varphi(O;\psi_a,\eta)\) be defined as in Theorem~\ref{thm:obs_if}. 
Then if (1)-(5) below hold (for all \(j\) where applicable), under regularity conditions including sample splitting and consistency of all nuisance functions, the estimator \(\hat{\psi}^{(S3,IF)}_a\) obtained as the solution to Equation~\ref{eq:est_equation} has the property
\[
\sqrt{n}\,(\hat{\psi}_a - \psi_a)
\;\;\overset{d}{\longrightarrow}\;\;
\mathcal{N}\!\Big(0, \,\mathbb{E}\big[\varphi^2(O; \psi_a, \eta)\big]\Big).
\]
\begin{itemize}[noitemsep]
  \item[1.] $\lVert \hat{\mu}(\vec{M}, A, \vec{U}) - \mu(\vec{M}, A, \vec{U}) \rVert$ \\ 
  $\lVert \hat{p}(\vec{M} \mid A, \vec{U}) - p(\vec{M} \mid A, \vec{U}) \rVert = o_p(n^{-1/2})$,
  \item[2.] $\lVert \hat{\pi}(A \mid \vec{U}) - \pi(A \mid \vec{U}) \rVert$ \\
  $\lVert \hat{p}(\vec{M} \mid A, \vec{U}) - p(\vec{M} \mid A, \vec{U}) \rVert= o_p(n^{-1/2})$;
\item[3.] $\lVert \hat{\mathbb{E}}[t(Y)_j \mid \vec{M}, A, \vec{U}] - \mathbb{E}[t(Y)_j \mid \vec{M}, A, \vec{U}] \rVert$ \\
  $\lVert \hat{\mathbb{E}}[h(W)_j \mid \vec{U}] - \mathbb{E}[h(W)_j \mid \vec{U}] \rVert = o_p(n^{-1/2})$,
 \item[4.] $\lVert \hat{\mathbb{E}}[t(Y)_j \mid \vec{M}, A, \vec{U}] - \mathbb{E}[t(Y)_j \mid \vec{M}, A, \vec{U}] \rVert$ \\
  $\lVert \hat{\mathbb{E}}[f(Z)_j \mid \vec{U},A] - \mathbb{E}[f(Z)_j \mid \vec{U} ,A]\rVert = o_p(n^{-1/2})$,
 \item[5.] $\lVert \hat{\mathbb{E}}[f(Z)_j \mid \vec{U},A] - \mathbb{E}[f(Z)_j \mid \vec{U},A]\rVert$ \\
  $\lVert \hat{\mathbb{E}}[h(W)_j \mid \vec{U}] - \mathbb{E}[h(W)_j \mid \vec{U} ,A]\rVert = o_p(n^{-1/2})$,
\end{itemize}
where $\{h_j(W)\}_j$, $\{f_j(Z)\}_j$, and $\{t_j(Y)\}_j$ are collections  defined in Appendix~\ref{app:proxy_contrasts}. By definition, these collections include the identity functions on \(W\), \(Z\), and \(Y\). When \(\vec U\) is binary, the collections reduce to the identity functions.
\end{thm}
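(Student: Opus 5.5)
The asymptotic normality statement in Theorem~\ref{thm:asymp_normality} will follow the standard cross-fitting (sample splitting) argument for influence function-based estimators. The proof rests on a decomposition plus a bound on a second-order remainder; each condition (1)--(5) should control one product term of that remainder.

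\textbf{Step 1: decomposition.} Since $\varphi_{\mathrm{obs}}$ in Equation~\ref{eq:obs_if} is linear in $\psi_a$, the estimating equation~\ref{eq:est_equation} gives
\[
\hat\psi_a=\mathbb{E}_n[\tilde\varphi(O;\hat\eta)],\qquad \tilde\varphi:=\varphi+\psi_a .
\]
Writing $\mathbb{P}$ for the population expectation, we then have
\[
\hat\psi_a-\psi_a=(\mathbb{E}_n-\mathbb{P})\tilde\varphi(O;\eta)+(\mathbb{E}_n-\mathbb{P})\{\tilde\varphi(O;\hat\eta)-\tilde\varphi(O;\eta)\}+R(\hat\eta,\eta),
\]
where $R(\hat\eta,\eta):=\mathbb{P}[\tilde\varphi(O;\hat\eta)]-\psi_a$.

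\textbf{Step 2: the two stochastic terms.}
\begin{itemize}
\item The first term is a centered i.i.d.\ average, so the CLT yields the stated $\mathcal N(0,\mathbb{E}[\varphi^2])$ limit.
\item The second term is an empirical process term. With sample splitting, $\hat\eta$ is fixed given the training fold. A conditional Chebyshev argument then makes this term $O_p(n^{-1/2}\|\tilde\varphi(\hat\eta)-\tilde\varphi(\eta)\|_{L_2})$.
\item This is $o_p(n^{-1/2})$ by consistency of all nuisances, provided the weights $\mathcal C$ stay bounded. That requires bounded denominators, i.e.\ conditional means at distinct latent states bounded apart, which I add as a regularity condition alongside Assumption~\ref{assump:bounded} and positivity of $\pi$ and $p(\vec M\mid A,\vec U)$.
\end{itemize}

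\textbf{Step 3: the remainder (the main obstacle).} The goal is to show $R(\hat\eta,\eta)$ is a sum of products of nuisance errors matching (1)--(5).

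\emph{Integrating out the proxies.} I will first compute $\mathbb{P}[\tilde\varphi(O;\hat\eta)]$ by conditioning on $(\vec M,A,\vec U=\vec u_k)$ and using Assumption~\ref{assump:ci_kp}: $W$, $Z$ and $Y$ are mutually independent given $(\vec M,A,\vec U)$, and $W\ci \vec M,A\mid\vec U$. The Lagrange-type weights are products of factors of the form
\[
\frac{X-\hat{\mathbb{E}}[X\mid \vec u_j]}{\hat{\mathbb{E}}[X\mid \vec u_i]-\hat{\mathbb{E}}[X\mid \vec u_j]}
\]
(and their generalizations via $h_j,f_j,t_j$). Their conditional expectations at the true $\vec u_k$ therefore equal
\[
\delta_{ik}+O\bigl(\|\hat{\mathbb{E}}[h(W)\mid\vec U]-\mathbb{E}[h(W)\mid\vec U]\|\bigr),
\]
and analogously for $Z$ and $Y$. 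Under correct specification they reduce exactly to the properties in Equations~\ref{eqn:weight_prop1}--\ref{eqn:weight_prop2}.

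\emph{Expanding the weights.} I will write each conditional weight expectation as $\delta_{ik}$ plus an error $\epsilon^{X}_{ik}$ for $X\in\{W,Z,Y\}$, and collect terms.
\begin{itemize}
\item The $\delta_{ik}$ parts reproduce $\mathbb{P}[\varphi_{\mathrm{full}}(\hat\eta_{\mathrm{full}})]-\psi_a$ evaluated with the true $\vec U$. By the known front-door/Fulcher remainder this is a sum of products:
\[
(\hat\mu-\mu)(\hat p_{M}-p_{M})\quad\text{and}\quad(\hat\pi-\pi)(\hat p_M-p_M),
\]
giving conditions (1) and (2).
\item The cross terms pair $\epsilon^{Y}$ with $\epsilon^{W}$ or $\epsilon^{Z}$, and $\epsilon^{W}$ with $\epsilon^{Z}$, giving conditions (3)--(5).
\end{itemize}

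\emph{The delicate point.} The combination $\mathcal C_{WZ}+\mathcal C_{YW}+\mathcal C_{YZ}-2\mathcal C_{WZY}$ is chosen so that the \emph{first-order} errors cancel. Expanding $(1-\epsilon^Y)(1-\epsilon^W)(1-\epsilon^Z)$-type products, the linear terms in a single $\epsilon^X$ should vanish: each $\epsilon^X$ appears with coefficient $+1$ from two pairwise weights and $-2\cdot\tfrac12$-type contributions from the triple weight. I expect verifying this exact cancellation, including the first term's $\mathcal C_{WZ}$ weight paired with $\varphi_{\mathrm{full}_1}$ (whose $Y$-residual is mean zero, so only $W,Z$ errors enter), to be the main bookkeeping obstacle. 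Labeling ambiguity is handled by fixing the labeling from Theorem~\ref{thm:id_set3} consistently across nuisances.

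\textbf{Step 4: conclusion.} By Cauchy--Schwarz and boundedness, $R=o_p(n^{-1/2})$ under (1)--(5). Slutsky's theorem then gives the result.
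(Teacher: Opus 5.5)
Your proposal is correct and follows essentially the paper's route: the same split into a CLT term, an empirical-process term handled by sample splitting, and a remainder computed by conditioning on $(\vec M,A,\vec U)$, factoring the weight expectations via Assumption~\ref{assump:ci_kp}, reducing the $\delta_{ik}$ part to the full-data front-door remainder controlled by (1)--(2), and bounding the proxy cross-errors by (3)--(5). Your grouping is actually the sound one---the paper asserts that $\sum_i\mathbb{E}[\hat{\mathcal C}_{WZ}(A,\vec u_i)\hat\varphi_{\mathrm{full}_1}(Y,\vec M,A,\vec u_i)]$ is $o_p(n^{-1/2})$ by itself, yet it contains the first-order piece $\mathbb{E}[\{\hat p(\vec M\mid a,\vec U)/\hat p(\vec M\mid A,\vec U)\}(\mu-\hat\mu)]$, which only cancels against the $\hat\mu$-error in $\mathbb{E}[\hat\varphi_{\mathrm{full}_3}]-\psi_a$---and the cancellation you anticipate holds exactly via $(1+a)(1+b)+(1+a)(1+c)+(1+b)(1+c)-2(1+a)(1+b)(1+c)=1-(ab+ac+bc)-2abc$, with $1+a$, $1+b$, $1+c$ the conditional means at $\vec u_i$ of the $W$-, $Z$-, $Y$-factors (so the triple weight contributes $-2$, not $-2\cdot\tfrac12$, to each linear term).
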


%% file: L_Experiments.tex
\section{Experiments }\label{sec:experiments}
We demonstrate 
consistency of our proposed estimators using
a simulation study for a finite-support data-generating process (DGP) with binary variables in Table~\ref{table: results}, and also for a mixed DGP in Table~\ref{table: results_cont}. To ensure the datasets generated satisfy all three sets of assumptions in Section~\ref{sec:identification}, we selected models in their intersection. See Appendix \ref{app:DGP} and Appendix~\ref{app:DGP2} for details of the finite-support DGP and mixed DGP, respectively.

We consider the following estimators (definitions can be found in Appendix \ref{app:plug_in_estimation} and \ref{app:bm_estimators}): 
\begin{itemize}[noitemsep]
    \item[1.] The oracle plug-in estimator \(\hat{\psi}^{(\mathrm{Oracle,PI})}\)that has access to the full data, where variables \(A, \vec{M},Y,\vec{U}, W, Z\) are all observed (Equation~\ref{eq:S1_Oracle});
    \item[2.] A front-door plug-in estimator \(\hat{\psi}^{(\mathrm{FD,PI})}\) that (incorrectly) assumes \(\vec{M}\) satisfies the front-door criterion relative to \((A,Y)\) (Equation~\ref{eq:S3_FD});  
    \item[3.] The plug-in estimators \(\hat{\psi}^{(S1,PI)}\),\(\hat{\psi}^{(S2,PI)}\), \(\hat{\psi}^{(S3,PI)}\) (Equations \ref{eq:S1_PI},\ref{eq:S2_PI}, and \ref{eq:S3_PI}), and the influence function-based estimator \(\hat{\psi}^{(S3,IF)}\) (obtained by solving Equation~\ref{eq:est_equation} using influence function in Equation~\ref{eq:obs_if});
    \item[4.] The estimators $\hat{\psi}^{(S3,PI,MIS)}$, $\hat{\psi}^{(S3,IF, MIS)}$ which mirror the formulation of \(\hat{\psi}^{(S3,PI)}\) and \(\hat{\psi}^{(S3,IF)}\) but misspecify $p(M|a,u)$ for each $a$ value by labeling
    $u_i$ incorrectly.  
\end{itemize}

Results for a finite-support DGP are shown in Table.\ref{table: results}. At relatively small sample size ($\leq$ 1000), all non-oracle estimators perform poorly. 
Additionally, $\hat{\psi}^{(S3,IF)}$ exhibits
instability. Estimator instability is a known issue with many inverse weighted estimators arising in semi-parametric theory, even in cases where proximal inference is not involved. In Table~\ref{table: results}, the result for \(\hat \psi^{(S3, IF, Clipped)}\) illustrates how clipping can improve estimator stability at low sample size. At a moderately large sample size ($\geq 3000$), the front-door estimator $\hat{\psi}^{(FD,PI)}$ still exhibits substantial bias, due to model misspecification, while
\(\hat{\psi}^{(S1,PI)}\),\(\hat{\psi}^{(S2,PI)}\),\(\hat{\psi}^{(S3,PI)}\), and \(\hat{\psi}^{(S3,IF)}\) all provide consistent estimates. 

To demonstrate the robustness property of the influence function-based estimator $\hat{\psi}^{(S3,IF)}$,
we misspecified $p(M\mid a,u)$ for each $a$ value by labeling
$u_i$ incorrectly and recording
the resulting estimates
as $\hat{\psi}^{(S3,PI,MIS)}$ and $\hat{\psi}^{(S3,IF, MIS)}$. Unlike plug-in estimator $\hat{\psi}^{(S3,PI,MIS)}$, which outputs a biased estimate, the influence function-based estimator $\hat{\psi}^{(S3,IF, MIS)}$ remains consistent.

\begin{table}[t]
\caption{Simulation results for the binary DGP$^*$; Ground truth ACE = 0.258.} 
\centering
\begin{tabular}[h]{rlrrr}
\toprule
n & method & mean & bias & variance\\
\midrule
1000 & $\hat{\psi}^{(Oracle)}$ & 0.254 & -2.5e-03 & 7.0e-04\\
1000 & $\hat{\psi}^{(FD,PI)}$ & 0.159 & -9.7e-02 & 2.3e-04\\
1000 & $\hat{\psi}^{(S1,PI)}$ & 0.231 & -2.6e-02 & 1.5e-02\\
1000 & $\hat{\psi}^{(S2,PI)}$ & 0.237 & -2.0e-02 & 9.5e-03\\
1000 & $\hat{\psi}^{(S3,PI)}$ & 0.163 & -9.3e-02 & 8.7e-03\\
1000 & $\hat{\psi}^{(S3,IF)}$ & 2.6e+11 & 2.6e+11 & 3.3e+23\\
1000 & $\hat{\psi}^{(S3,IF, Clipped)^{\dagger}}$ & 0.333 & 7.6e-02 & 1.6e-01\\ 
\specialrule{1.5pt}{0pt}{0pt}
\addlinespace

3000 & $\hat{\psi}^{(Oracle)}$ & 0.257 & 4.0e-04 & 2.0e-04\\
3000 & $\hat{\psi}^{(FD,PI)}$ & 0.161 & -9.6e-02 & 6.0e-05\\
3000 & $\hat{\psi}^{(S1,PI)}$ & 0.259 & 2.4e-03 & 7.2e-04\\
3000 & $\hat{\psi}^{(S2,PI)}$ & 0.259 & 1.9e-03 & 6.2e-04\\
3000 & $\hat{\psi}^{(S3,PI)}$ & 0.242 & -1.5e-02 & 3.7e-03\\
3000 & $\hat{\psi}^{(S3,IF)}$ & 0.261 & 3.9e-03 & 7.2e-03\\
\midrule
\addlinespace
3000 & $\hat{\psi}^{(S3,PI,MIS)}$ & -0.181 & -4.4e-01 & 4.0e-03\\
3000 & $\hat{\psi}^{(S3,IF, MIS)}$ & 0.243 & -1.4e-02 & 9.2e-03\\
\specialrule{1.5pt}{0pt}{0pt}
\addlinespace

6000 & $\hat{\psi}^{(Oracle)}$ & 0.257 & 6.0e-04 & 9.7e-05\\
6000 & $\hat{\psi}^{(FD,PI)}$ &
0.161 & -9.6e-02 & 3.1e-05\\
6000 & $\hat{\psi}^{(S1,PI)}$ & 0.257 & 1.5e-04 & 4.2e-04\\
6000 & $\hat{\psi}^{(S2,PI)}$ & 0.258 & 1.1e-03 & 4.2e-04\\
6000 & $\hat{\psi}^{(S3,PI)}$ & 0.249 & -8.1e-03 & 2.7e-03\\
6000 & $\hat{\psi}^{(S3,IF)}$ & 0.249 & -7.6e-03 & 2.5e-03\\
\midrule
\addlinespace
6000 & $\hat{\psi}^{(S3,PI,MIS)}$ & -0.198 & -4.5e-01 & 3.2e-03\\
6000 & $\hat{\psi}^{(S3,IF, MIS)}$ & 0.253 & -4.1e-03 & 8.1e-03\\
\specialrule{1.5pt}{0pt}{0pt}
\end{tabular}
\noindent
\begin{minipage}{0.95\textwidth}
\footnotesize
$*$ Data-generating process can be found in Appendix~\ref{app:DGP}. \\
$^\dagger$ Clipping was applied to
$\frac{Y-E[Y\mid u_j,a,\vec{m}]}
{E[Y\mid u_i,a,\vec{m}]-E[Y\mid u_j,a,\vec{m}]}$,
where the \\denominator's  magnitude was lower bounded by {0.01}.

\end{minipage}
\label{table: results}
\end{table}

\begin{table}[h]
\caption{Simulation results for the mixed DGP$^*$; Ground truth ACE = 0.169.} 
\centering
\begin{tabular}[h]{rlrrr}
\toprule
n & method & mean & bias & variance\\
\midrule
1000 & $\hat{\psi}^{(Oracle)}$ & 0.167 & -2.1e-03 & 1.0e-03\\
1000 & $\hat{\psi}^{(FD,PI)}$ & 0.303 & 1.3e-01 & 1.4e-03\\
1000 & $\hat{\psi}^{(S1,PI)}$ & 0.174 & 5.2e-03 & 3.1e-02\\
1000 & $\hat{\psi}^{(S2,PI)}$ & 0.175 & 6.0e-03 & 1.9e-02\\
1000 & $\hat{\psi}^{(S3,PI)}$ & 0.197 & 2.8e-02 & 1.9e-03\\
1000 & $\hat{\psi}^{(S3,IF)}$ & 0.161 &-7.7e-03 & 5.9e-03\\
\specialrule{1.5pt}{0pt}{0pt}
\addlinespace

3000 & $\hat{\psi}^{(Oracle)}$ & 0.170 & 7.0e-04 & 3.0e-04\\
3000 & $\hat{\psi}^{(FD,PI)}$ & 0.303 & 1.3e-01 & 7.0e-04\\
3000 & $\hat{\psi}^{(S1,PI)}$ & 0.173 & 3.8e-03 & 9.0e-04\\
3000 & $\hat{\psi}^{(S2,PI)}$ & 0.168 & -1.3e-03 & 1.3e-03\\
3000 & $\hat{\psi}^{(S3,PI)}$ & 0.180 & 1.1e-02 & 1.2e-03\\
3000 & $\hat{\psi}^{(S3,IF)}$ & 0.179 & 1.1e-02 & 2.8e-03\\
\midrule
\specialrule{1.5pt}{0pt}{0pt}
\addlinespace

6000 & $\hat{\psi}^{(Oracle)}$ & 0.169 & -2.0e-04 & 2.0e-04\\
6000 & $\hat{\psi}^{(FD,PI)}$ & 0.301 & 1.3e-01 & 3.0e-04\\
6000 & $\hat{\psi}^{(S1,PI)}$ & 0.164 & -4.6e-03 & 4.0e-04\\
6000 & $\hat{\psi}^{(S2,PI)}$ & 0.167 & -1.1e-03 & 5.0e-04\\
6000 & $\hat{\psi}^{(S3,PI)}$ & 0.168 & -9.0e-04 & 5.0e-04\\
6000 & $\hat{\psi}^{(S3,IF)}$ & 0.166 & -2.7e-03 & 1.5e-03\\
\midrule
\specialrule{1.5pt}{0pt}{0pt}
\end{tabular}
\begin{minipage}{0.95\textwidth}
\footnotesize
$*$ Data-generating process can be found in Appendix~\ref{app:DGP2}.
\end{minipage}
\label{table: results_cont}
\end{table}

Results for the mixed DGP are shown in Table~\ref{table: results_cont}. The estimators exhibit patterns similar to those observed in Table~\ref{table: results}, although small-sample bias is reduced. 

%% file: N_Discussion.tex
\section{Discussion}\label{sec:discussion}

The core contribution of our work is development of methods that 
address a central difficulty with the front-door criterion: that mediators often share unmeasured causes with the treatment and outcome.
Specifically, we 
develop three distinct identification strategies based on proxies 
which allow identification of causal effects
in cases where the front-door criterion fails. 
The first two strategies accommodate continuous hidden confounders and identify causal effects via solutions to Fredholm integral equations, relying on non-nested conditional independence assumptions that permit different causal structures. The third strategy permits a more restricted class of causal structures, but recovers the full law, allowing 
any functional of the full data distribution to be identified. 

In addition, we develop estimation strategies for the obtained identifying functionals based on the plug-in principle.  Finally, we develop an influence function based estimator for the functional obtained via the third strategy, based on estimation theory in semi-parametric models.

Simulation experiments demonstrate that all three approaches effectively debias causal effect estimation in the presence of hidden confounding when the front-door 
criterion fails. Moreover, the simulations illustrate that the influence function–based estimator for functionals obtained under the third strategy remains consistent even under certain forms of model misspecification.


%% file: S_Appendix.tex
\section{Identification Proof for Assumption Set 1}\label{app:id_set1}

\begin{proof}
Assumption~\ref{assump:ci_var1} and first bridge solution in Assumption~\ref{assump:bridge_var1} gives\

\begin{equation}
\sum_{\vec{u}}p(y \mid \vec{u},a',\vec{m},\cancel{z})p(\vec{u} \mid z, a',\vec{m})= \sum_{w, \vec{u}} h_1(y,a',\vec{m},w)p(w \mid u,\cancel{a',\vec{m},z})p(\vec{u} \mid z, a',\vec{m}).
\end{equation}

Then, the first completeness condition in Assumption~\ref{assump:completeness_var1} gives

\begin{equation}\label{eq:bridge1_id_set1}
p(y \mid \vec{u},a',\vec{m}) = \sum_{w} h_1(y,a',\vec{m},w)p(w \mid \vec{u}).
\end{equation}

Similarly, Assumption~\ref{assump:ci_var1} and the second bridge solution in Assumption~\ref{assump:bridge_var1} gives

\begin{equation}
\sum_{w,\vec{m}, \vec{u}}h_1(y,\vec{m},a',w)p(w,\vec{m} \mid \vec{u},\cancel{z},a)p(\vec{u} \mid z, a) = \sum_{w, \vec{u}}h_0(y, w,
a',a)p(w \mid \vec{u},\cancel{z,a})p(\vec{u} \mid z, a).
\end{equation}

Then, the second completeness condition in Assumption~\ref{assump:completeness_var1} gives

\begin{equation}\label{eq:bridge2_id_set1}
\sum_{w,\vec{m}}h_1(y,\vec{m},a',w)p(w,\vec{m} \mid \vec{u},a)= \sum_{w}h_0(y, w,
a',a)p(w \mid \vec{u}).
\end{equation}

Hence, we have that

\begin{align}
    p(Y(a)=y) &= \sum_{\vec{u},\vec{m},a'} p(y \mid \vec{m},a',\vec{u})p(\vec{m} \mid a,\vec{u})p(a' \mid \vec{u})p(\vec{u})\\
    &=\sum_{\vec{u},\vec{m},a', w}h_1(y,\vec{m},a',w)p(w \mid \vec{u})p(\vec{m} \mid a,\vec{u})p(a' \mid \vec{u})p(\vec{u}) \text{ by Equation~\ref{eq:bridge1_id_set1}}\\
    &=\sum_{\vec{u},\vec{m},a', w}h_1(y,\vec{m},a',w)p(w \mid \vec{u},a)p(\vec{m} \mid a,\vec{u})p(a' \mid \vec{u})p(\vec{u}) \text{ by Assumption~\ref{assump:ci_var1}}\\
    &= \sum_{\vec{u},a',\vec{m},w} h_1(y,\vec{m},a',w)p(w,\vec{m} \mid a,\vec{u})p(a' \mid \vec{u}) p(\vec{u}) \\
    &= \sum_{\vec{u},a',w} h_0(y, w,a',a)p(w \mid \vec{u})p(a' \mid \vec{u}) p(\vec{u}) \text{ by Equation~\ref{eq:bridge2_id_set1}}\\
    &= \sum_{\vec{u},a',w} h_0(y, w,a',a)p(w \mid \vec{u},a')p(a' \mid \vec{u}) p(\vec{u}) \text{ by Assumption~\ref{assump:ci_var1}}\\
    &= \sum_{\vec{u},a',w} h_0(y, w,a',a)p(w, a' \mid \vec{u})p(\vec{u}) \\
    &= \sum_{a',w}h_0(y, w,a',a)p(w,a')
\end{align}
\end{proof}

\section{Bridge Solutions for Assumption~\ref{assump:bridge_var1} in Finite-Support Models}\label{app:bridge_set1}

For each \((a, \vec{m})\), let \[\mathbf{P}_{Y \mid Z, a,\vec{m}} = \begin{bmatrix}
\Pr(y_1 \mid z_1, a,\vec{m}) & \cdots & \Pr(y_1 \mid z_{|\mathcal{Z}|}, a,\vec{m}) \\ 
\Pr(y_2 \mid z_1, a,\vec{m}) & \cdots & \Pr(y_2 \mid z_{|\mathcal{Z}|}, a,\vec{m}) \\ 
\vdots & \ddots & \vdots \\ 
\Pr(y_{|\mathcal{Y}|} \mid z_1, a,\vec{m}) & \cdots & \Pr(y_{|\mathcal{Y}|} \mid z_{|\mathcal{Z}|}, a,\vec{m}) 
\end{bmatrix}\] 

and \[\mathbf{P}_{W \mid Z, a,\vec{m}} = \begin{bmatrix}
\Pr(w_1 \mid z_1, a,\vec{m}) & \cdots & \Pr(w_1 \mid z_{|\mathcal{Z}|}, a,\vec{m}) \\ 
\Pr(w_2 \mid z_1, a,\vec{m}) & \cdots & \Pr(w_2 \mid z_{|\mathcal{Z}|}, a,\vec{m}) \\ 
\vdots & \ddots & \vdots \\ 
\Pr(w_{|\mathcal{W}|} \mid z_1, a,\vec{m}) & \cdots & \Pr(w_{|\mathcal{W}|} \mid z_{|\mathcal{Z}|}, a,\vec{m}) 
\end{bmatrix}.\]

Then if distributions in \(\{p(W \mid z,a,\vec{m}): z \in \mathcal{Z}\}\) vary sufficiently across values \(z\) in its support \(\mathcal{Z}\) for each \((a,\vec{m})\) so that \(\mathbf{P}_{W \mid Z, a,\vec{m}}\) is left invertible with left inverse \(\mathbf{P}_{W \mid Z, a,\vec{m}}^\dagger\), we have that

\begin{equation}
\mathbf{P}_{Y \mid Z, a,\vec{m}} = \mathbf{P}_{Y \mid Z, a,\vec{m}} \mathbf{P}_{W \mid Z, a,\vec{m}}^\dagger \mathbf{P}_{W \mid Z, a,\vec{m}}
\end{equation}

Hence, \( \mathbf{P}_{Y \mid Z, a,\vec{m}} \mathbf{P}_{W \mid Z, a,\vec{m}}^\dagger  \) defines \(h_1(y,a,\vec{m},w)\) in Assumption~\ref{assump:bridge_var1} such that

\[p(y \mid z,a,\vec{m}) = \sum_{w} h_1(y,a,\vec{m},w)p(w \mid z,a,\vec{m}).\]

Similarly, for each \(a\), let \[\mathbf{P}_{h_1(Y) \mid Z, a} = \begin{bmatrix}
 \sum_{w, \vec{m}} h_1(y_1,a,\vec{m},w)p(w, \vec{m} \mid z_1, a) & \cdots & \sum_{w, \vec{m}} h_1(y_1,a,\vec{m},w)p(w, \vec{m} \mid z_{||\mathcal{Z}|}, a) \\ 
\sum_{w, \vec{m}} h_1(y_2,a,\vec{m},w)p(w, \vec{m} \mid z_2, a)& \cdots & \sum_{w, \vec{m}} h_1(y_1,a,\vec{m},w)p(w, \vec{m} \mid z_{||\mathcal{Z}|}, a) \\ 
\vdots & \ddots & \vdots \\ 
\sum_{w, \vec{m}} h_1(y_{|\mathcal{Y}|} ,a,\vec{m},w)p(w, \vec{m} \mid z_1, a) & \cdots & \sum_{w, \vec{m}} h_1(y_{|\mathcal{Y}|} ,a,\vec{m},w)p(w, \vec{m} \mid z_{||\mathcal{Z}|}, a)
\end{bmatrix}\] 

Then if \(\mathbf{P}_{W \mid Z, a}\) is left invertible with left inverse \(\mathbf{P}_{W \mid Z, a}^\dagger\), we have that

\begin{equation}
\mathbf{P}_{h_1(Y) \mid Z, a'} = \mathbf{P}_{h_1(Y) \mid Z, a'} \mathbf{P}_{W \mid Z, a}^\dagger \mathbf{P}_{W \mid Z, a}
\end{equation}

Hence, \(\mathbf{P}_{h_1(Y) \mid Z, a'} \mathbf{P}_{W \mid Z, a}^\dagger \) defines \(h_0(y, w,a',a)\) in Assumption~\ref{assump:bridge_var1} such that

\[\sum_{w,m} h_1(y,\vec{m},a',w)\, p(w,\vec{m} \mid z,a) \\
 =
    \sum_{w} h_0(y, w,a',a)\, p(w \mid z,a).\]

\section{Identification Proof for Assumption Set 2}\label{app:id_set2}
\begin{proof}
Assumption~\ref{assump:ci_var2} and the first bridge solution in Assumption~\ref{assump:bridge_var2} gives 

\begin{equation}
\sum_{\vec{u}}p(y  \mid  u,a,\vec{m},\cancel{z})p(\vec{u}  \mid  a,\vec{m},z) = \sum_{w, \vec{u}} b_1(y,a,\vec{m},w)p\vec({w}  \mid  \vec{u}, a, \cancel{\vec{m},z})p(\vec{u}  \mid  a,\vec{m},z).
\end{equation}

Then, the first completeness condition in Assumption~\ref{assump:completeness_var2} gives

\begin{equation}\label{eq:bridge1_id_set2}
p(y  \mid  \vec{u},a,\vec{m}) = \sum_{w} b_1(y,a,\vec{m},w)p(w  \mid  \vec{u}, a).
\end{equation}

Similarly, Assumption~\ref{assump:ci_var2} and the second bridge solution in Assumption~\ref{assump:bridge_var2} gives 

\begin{equation}
\sum_{\vec{u}}p(\vec{m}  \mid  \vec{u},a,\cancel{w})p(\vec{u}  \mid  w, a) = \sum_{z, w}b_0(\vec{m},a,z)p(z  \mid  \vec{u},\cancel{w,a})p(\vec{u}  \mid  w, a).
\end{equation}

Then, the second completeness condition in Assumption~\ref{assump:completeness_var2} gives

\begin{equation}\label{eq:bridge2_id_set2}
p(\vec{m}  \mid  \vec{u},a) = \sum_{z} b_0(\vec{m}, a, z)p(z  \mid  \vec{u}).
\end{equation}

Hence, we have that
\begin{align}
    p(Y(a)=y) &= \sum_{\vec{u},\vec{m},{a'}} p(y \mid \vec{u},a',\vec{m})p(a' \mid \vec{u})p(\vec{m} \mid a,\vec{u})p(\vec{u})\\
    &= \sum_{\vec{u},\vec{m},a', w,z} b_1(\cdot) p(w  \mid  \vec{u}, a')p(a' \mid \vec{u})b_0(\cdot)p(z  \mid  \vec{u})p(\vec{u})  \text{ by Equation~\ref{eq:bridge1_id_set2} and Assumption~\ref{assump:ci_var2}}\\
    &= \sum_{\vec{u},\vec{m},a', w,z} b_1(\cdot) p(w, a'  \mid  \vec{u})b_0(\cdot)p(z  \mid  \vec{u})p(\vec{u})  \text{ by Equation~\ref{eq:bridge2_id_set2}}\\
    &= \sum_{\vec{u},\vec{m},a', w,z} b_1(\cdot) p(w, z, a' \mid  \vec{u})b_0(\cdot)p(\vec{u}) \text{ by Assumption~\ref{assump:ci_var2}}\\
    &=\sum_{\vec{m},a',w,z} b_1(y, a',\vec{m},w)b_0(\vec{m},a,z)p(w,z,a')
\end{align}

\end{proof}

\section{Bridge Solutions for Assumption~\ref{assump:bridge_var2} in Finite-Support Models}\label{app:bridge_set2}

For each \((a, \vec{m})\), let \[\mathbf{P}_{Y \mid Z, a,\vec{m}} = \begin{bmatrix}
\Pr(y_1 \mid z_1, a,\vec{m}) & \cdots & \Pr(y_1 \mid z_{|\mathcal{Z}|}, a,\vec{m}) \\ 
\Pr(y_2 \mid z_1, a,\vec{m}) & \cdots & \Pr(y_2 \mid z_{|\mathcal{Z}|}, a,\vec{m}) \\ 
\vdots & \ddots & \vdots \\ 
\Pr(y_{|\mathcal{Y}|} \mid z_1, a,\vec{m}) & \cdots & \Pr(y_{|\mathcal{Y}|} \mid z_{|\mathcal{Z}|}, a,\vec{m}) 
\end{bmatrix}\] 

and \[\mathbf{P}_{W \mid Z, a,\vec{m}} = \begin{bmatrix}
\Pr(w_1 \mid z_1, a,\vec{m}) & \cdots & \Pr(w_1 \mid z_{|\mathcal{Z}|}, a,\vec{m}) \\ 
\Pr(w_2 \mid z_1, a,\vec{m}) & \cdots & \Pr(w_2 \mid z_{|\mathcal{Z}|}, a,\vec{m}) \\ 
\vdots & \ddots & \vdots \\ 
\Pr(w_{|\mathcal{W}|} \mid z_1, a,\vec{m}) & \cdots & \Pr(w_{|\mathcal{W}|} \mid z_{|\mathcal{Z}|}, a,\vec{m}) 
\end{bmatrix}.\]

Then if distributions in \(\{p(W \mid z,a,\vec{m}): z \in \mathcal{Z}\)\} vary sufficiently across values \(z\) in its support \(\mathcal{Z}\) for each \((a,\vec{m})\) so that \(\mathbf{P}_{W \mid Z, a,\vec{m}}\) is left invertible with left inverse \(\mathbf{P}_{W \mid Z, a,\vec{m}}^\dagger\), we have that

\begin{equation}
\mathbf{P}_{Y \mid Z, a,\vec{m}} = \mathbf{P}_{Y \mid Z, a,\vec{m}} \mathbf{P}_{W \mid Z, a,\vec{m}}^\dagger \mathbf{P}_{W \mid Z, a,\vec{m}}
\end{equation}

Hence, \( \mathbf{P}_{Y \mid Z, a,\vec{m}} \mathbf{P}_{W \mid Z, a,\vec{m}}^\dagger  \) defines \(b_1(y,a,\vec{m},w)\) in Assumption~\ref{assump:bridge_var2} such that

\[p(y \mid z,a,\vec{m}) = \sum_{w} b_1(y,a,\vec{m},w)p(w \mid z,a,\vec{m}).\]

Similarly, for each \((a, \vec{m})\), define \(\mathbf{P}_{\vec{M} \mid W, a,\vec{m}}\)
and \(\mathbf{P}_{Z \mid W, a,\vec{m}}\).

Then if \(\mathbf{P}_{Z \mid W, a,\vec{m}}\) is left invertible with left inverse \(\mathbf{P}_{Z \mid W, a,]\vec{m}}^\dagger\), we have that

\begin{equation}
\mathbf{P}_{\vec{M} \mid W, a,\vec{m}} = \mathbf{P}_{\vec{M} \mid W, a,\vec{m}} \mathbf{P}_{Z \mid W, a,\vec{m}}^\dagger \mathbf{P}_{Z \mid W, a,\vec{m}}
\end{equation}

Hence, \(\mathbf{P}_{\vec{M} \mid W, a,\vec{m}} \mathbf{P}_{Z \mid W, a,\vec{m}}^\dagger \) defines \(b_0(\vec{m},a,z)\) in Assumption~\ref{assump:bridge_var2} such that

\[p(\vec{m} \mid w,a) = \sum_{z} b_0(\vec{m},a,z)p(z \mid w,a).\]

\section{Identification Proof for Assumption Set 3}\label{app:id_set3}
\begin{proof}
By Theorem 1 in \cite{hu2008instrumental}, under Assumptions~\ref{assump:ci_kp}-~\ref{assump:distinctness_var3}, \(p(Y, \vec{u}_i, W, Z \mid a,\vec m)\) is identified up to label \(i\) for each \((a, \vec m)\). Then, by \(W \ci A,\vec M \mid \vec U\) in Assumption~\ref{assump:ci_kp},
\(
p(W \mid u_i,a,\vec m)=p(W \mid u_i).
\)
Consequently, the label \(u_i\) can be matched across the conditional distributions
\(
p(Y,\vec u_i,W,Z \mid a,\vec m)
\)
for different values of \((a,\vec m)\), thereby recovering the full-data law
\(
p(A,\vec M,Y,\vec u_i,W,Z)\) up to label \(i\). Hence, \(p(Y(a)=y) = \sum_{\vec{u},\vec{m},a'} p(y \mid \vec{u}_i,a',\vec{m})p(a' \mid \vec{u}_i)p(\vec{m} \mid a,\vec{u}_i)p(\vec{u}_i)\) is identified in Fig.~\ref{fig:primal_v3}.
\end{proof}

\section{Finite-Support Identification Example under Assumption Set 3}\label{app:id_set3_discrete}

\begin{proof}
Suppose \(A, \vec{M}, Y, \vec{U}, W, Z\) are have finite support, and furthermore that \(W\), \(Z\), and \(\vec{U}\) have the same number of categories. 

Define 
\[
\mathbf{P}_{W \mid Z}
=
\begin{bmatrix}
\Pr(w_1 \mid z_1) & \cdots & \Pr(w_1 \mid z_n) \\
\Pr(w_2 \mid z_1) & \cdots & \Pr(w_2 \mid z_n) \\
\vdots & \ddots & \vdots \\
\Pr(w_n \mid z_1) & \cdots & \Pr(w_n \mid z_n)
\end{bmatrix},
\]
and analogously define $\mathbf{P}_{W \mid U}$, $\mathbf{P}_{U \mid Z}$, and $\mathbf{P}_{y,W \mid Z,a,\vec{m}}$. In other words, the $(k,j)^\text{th}$ entry of $\mathbf{P}_{y, W \mid Z,a,\vec{m}}$ is
\[
\mathbf{P}_{y, W \mid Z,a,\vec{m}}(k,j)
=
\Pr(Y=y, W=w_k \mid Z=z_j,A=a, \vec M=\vec{m}).
\]

Additionally, define \(\operatorname{diag}(\mathbf{P}_{y \mid U,a,\vec{m}})\) as the diagonal matrix with \(i^\text{th}\) diagonal entry \(
\Pr(Y=y\mid \vec{U}=\vec{u}_i,A=a, \vec M=\vec{m})
\).

By Assumption~\ref{assump:ci_kp}, 
\begin{equation}
    \mathbf{P}_{W \mid Z} = \mathbf{P}_{W \mid U} \mathbf{P}_{U \mid Z}, \text{ and} \\
\end{equation}
\begin{equation}
    \mathbf{P}_{y,W \mid Z,a,\vec{m}} = \mathbf{P}_{W \mid U} \operatorname{diag}(\mathbf{P}_{y \mid U,a,\vec{m}}) \mathbf{P}_{U \mid Z}.
\end{equation}

By Assumption~\ref{assump:completeness_var3}, \(\mathbf{P}_{U \mid Z}\) and \(\mathbf{P}_{W \mid U}\) are invertible. Then we have the following:
\[
\mathbf{P}_{W \mid Z}^{-1} = \mathbf{P}_{U \mid Z}^{-1}\, \mathbf{P}_{W \mid U}^{-1}.
\] Hence,
\begin{equation}
\mathbf{P}_{y,W \mid Z,a,\vec{m}} \mathbf{P}_{W \mid Z}^{-1} = \mathbf{P}_{W \mid U} \operatorname{diag}(\mathbf{P}_{y \mid U,a,\vec{m}})  \mathbf{P}_{W \mid U}^{-1}.
\end{equation}

Eigendecomposition of \(\mathbf{P}_{y,W \mid Z,a,\vec{m}} \mathbf{P}_{W \mid Z}^{-1}\) gives us the column vectors of \(\mathbf{P}_{W \mid U}\) under Assumption~\ref{assump:distinctness_var3}. In other words, we recover conditional distributions \(p(W \mid \vec{u}_i)\) [up to label \(i\)].  
Inverting   
\(
p(Z,A, \vec{M}, Y, W ) = \sum_{i} p(Z,A, \vec{M}, Y, \vec{u}_i)\, p(W \mid \vec{u}_i),
\)  
which holds by Assumption~\ref{assump:ci_kp}, we recover \(p(Z,A,\vec{M},Y \mid \vec{u}_i)\) and \(p(\vec{u}_i)\).

Hence, we recover \(p(A,\vec{M},Y,\vec{u}_i,W,Z) = p(W \mid \vec{u}_i) p(Z,A,\vec{M},Y \mid \vec{u}_i) p(\vec{u}_i) \).

This proof extends to settings in which \(W\) and \(Z\) have higher cardinality than \(\vec{U}\) by appropriately coarsening variables. 
\end{proof}

\section{Plug-In Estimators for Assumption Sets 1, 2, and 3}\label{app:plug_in_estimation}

\subsection{Plug-In Estimator for Assumption Set 1}

Under Assumptions~\ref{assump:ci_var1}--\ref{assump:bridge_var1}, the counterfactual distribution in Figure~\ref{fig:primal_v1} is identified as in Theorem~\ref{thm:id_set1}.

Define the mean-induced bridge:
\begin{equation}\label{eq:mean_bridge1}
\bar h_0(w,a',a)
\;:=\;
\sum_{y} y\, h_0(y,w,a',a).
\end{equation}

Then the counterfactual mean satisfies
\begin{equation}
\mathbb{E}\!\left[Y(a)\right]
\;=\;
\sum_{a',w} \bar h_0(w,a',a)\, p(w,a')
\;=\;
\mathbb{E}\!\left[\bar h_0(W,A',a)\right],
\end{equation}
and hence the average causal effect can be written as
\begin{equation}
\mathbb{E}\!\left[Y(1)-Y(0)\right]
\;=\;
\mathbb{E}\!\left[\bar h_0(W,A',1)\right]
-
\mathbb{E}\!\left[\bar h_0(W,A',0)\right].
\end{equation}

We estimate this quantity using the plug-in estimator
\begin{equation}\label{eq:S1_PI}
\hat\psi^{\text{(S1,PI)}}
=
\mathbb{E}_n\!\left[\bar h_0(W,A',1)\right]
-
\mathbb{E}_n\!\left[\bar h_0(W,A',0)\right].
\end{equation}
Provided that $\bar h_0(W,A',a_1) \xrightarrow{p} \bar h_0(W,A',a;\eta_1)$ and standard  regularity conditions hold, consistency of $\hat\psi^{\text{(S1)}}$ follows.

\subsection{Plug-In Estimator for Assumption Set 2}
Under Assumptions~\ref{assump:ci_var2}--\ref{assump:bridge_var2}, the counterfactual distribution in Figure~\ref{fig:primal_v2} is identified as in Theorem~\ref{thm:id_set2}.

Define the mean-induced bridge:
\begin{equation}\label{eq:mean_bridge2}
\bar b_2(a',a,w,z)
\;:=\;
\sum_{y,\vec{m}} y\, b_1(y,a',\vec{m},w) b_0(\vec{m},a,z).
\end{equation}

Then the counterfactual mean satisfies
\begin{equation}
\mathbb{E}\!\left[Y(a)\right]
\;=\;
\sum_{a',w,z}
\bar b_2(a',a,w,z)\,
p(w,z,a'),
\end{equation}
and hence the average causal effect can be written as
\begin{equation}
\mathbb{E}\!\left[Y(1)-Y(0)\right]
=
\mathbb{E}\!\left[
\bar b_2(A',1,W,Z)
\right] -
\mathbb{E}\!\left[
\bar b_2(A',0,W,Z)
\right].
\end{equation}

We estimate this quantity using the estimator
\begin{equation}\label{eq:S2_PI}
\hat\psi^{\text{(S2,PI)}} =
\mathbb{E}_n\!\left[
\bar b_2(A',1,W,Z_2)
\right] -
\mathbb{E}_n\!\left[
\bar b_2(A',0,W,Z_2)
\right].
\end{equation}

Provided that \(\bar b_2(A',1,W,Z_2)\xrightarrow{p}\bar b_2(A',0,W,Z;\eta_2)\) and standard regularity conditions hold, consistency of $\hat\psi^{\text{(S2)}}$ follows.

\subsection{Plug-In Estimator for Assumption Set 3}
By Theorem \ref{thm:id_set3}, under Assumptions~\ref{assump:ci_kp}-\ref{assump:distinctness_var3}, the counterfactual distribution in Fig. \ref{fig:primal_v3} can be derived from the full law \[\{ p(A,\vec{M},Y,\vec{u}_i,W,Z): \vec{u}_i \} \]  up to relabeling of $u_i$.  Distributional components $p(Y|\vec{m},a',\vec{u}_i)$, $p(a'|\vec{u}_i)$, and $p(\vec{m}|a,\vec{u}_i)$ can be estimated for each level of $\vec{u}_i$, e.g. via eigendecomposition tasks details in Appendix~\ref{app:id_set3_discrete}. Then, we construct the plug-in estimator 
\begin{equation}\label{eq:S3_PI}
 \hat{\psi}^{(S3,PI)} 
 = \sum_{\vec{u}_i,\vec{m},a'} 
 \hat E[Y \mid \vec{m},a',\vec{u}_i] \,
 \hat p(a' \mid \vec{u}_i)
 [\hat p(\vec{m} \mid a=1,\vec{u}_i)-\hat p(\vec{m} \mid a=0,\vec{u}_i)]
 \hat p(\vec{u}_i).
\end{equation}

\section{Proxy Weight Constructions for finite-support hidden variable $U$}\label{app:proxy_contrasts}

For each \((\vec{u}_i,a)\), define
\begin{equation}
\begin{aligned}
&\mathcal{C}_{WZ}(\vec{u}_i,a) =\\
& \prod_{j \neq i} \Big\{
\frac{h_j(W) - \mathbb{E}\!\left[h_j(W)\mid \vec{U}=\vec{u}_j\right]}
     {\mathbb{E}\!\left[h_j(W)\mid \vec{U}=\vec{u}_i\right]
      - \mathbb{E}\!\left[h_j(W)\mid \vec{U}=\vec{u}_j\right]} \cdot
\frac{f_j(Z) - \mathbb{E}\!\left[f_j(Z)\mid \vec{U}=\vec{u}_j,a\right]}
     {\mathbb{E}\!\left[f_j(Z)\mid \vec{U}=\vec{u}_i,a\right]
      - \mathbb{E}\!\left[f_j(Z)\mid \vec{U}=\vec{u}_j,a\right]} \Big\}, 
\end{aligned}
\end{equation}

and for each \((\vec{u}_i,a,\vec{m})\), define

\begin{equation}
\begin{aligned}
&\mathcal{C}_{YW}(\vec{u}_i,a,\vec{m}) =\\
& \prod_{j \neq i} \Big\{
\frac{t_j(Y) - \mathbb{E}\!\left[t_j(Y)\mid \vec{U}=\vec{u}_j,a,\vec{m}\right]}
     {\mathbb{E}\!\left[t_j(Y)\mid \vec{U}=\vec{u}_i,a,\vec{m}\right]
      - \mathbb{E}\!\left[t_j(Y)\mid \vec{U}=\vec{u}_j,a,\vec{m}\right]} \cdot
\frac{h_j(W) - \mathbb{E}\!\left[h_j(W)\mid \vec{U}=\vec{u}_j\right]}
     {\mathbb{E}\!\left[h_j(W)\mid \vec{U}=\vec{u}_i\right]
      - \mathbb{E}\!\left[h_j(W)\mid \vec{U}=\vec{u}_j\right]} \Big\},
\end{aligned}
\end{equation}

\begin{equation}
\begin{aligned}
&\mathcal{C}_{YZ}(\vec{u}_i,a,\vec{m}) =\\
& \prod_{j \neq i} \Big\{
\frac{t_j(Y) - \mathbb{E}\!\left[t_j(Y)\mid \vec{U}=\vec{u}_j,a,\vec{m}\right]}
     {\mathbb{E}\!\left[t_j(Y)\mid \vec{U}=\vec{u}_i,a,\vec{m}\right]
      - \mathbb{E}\!\left[t_j(Y)\mid \vec{U}=\vec{u}_j,a,\vec{m}\right]} \cdot
\frac{f_j(Z) - \mathbb{E}\!\left[f_j(Z)\mid \vec{U}=\vec{u}_j,a\right]}
     {\mathbb{E}\!\left[f_j(Z)\mid \vec{U}=\vec{u}_i,a\right]
      - \mathbb{E}\!\left[f_j(Z)\mid \vec{U}=\vec{u}_j.a\right]} \Big\},
\end{aligned}
\end{equation}

\begin{equation}
\begin{aligned}
&\mathcal{C}_{YWZ}(\vec{u}_i,a,\vec{m}) =\\
&\prod_{j \neq i} \Big\{
\frac{t_j(Y) - \mathbb{E}\!\left[t_j(Y)\mid \vec{U}=\vec{u}_j,a,\vec{m}\right]}
     {\mathbb{E}\!\left[t_j(Y)\mid \vec{U}=\vec{u}_i,a,\vec{m}\right]
      - \mathbb{E}\!\left[t_j(Y)\mid \vec{U}=\vec{u}_j,a,\vec{m}\right]} \cdot
\frac{h_j(W) - \mathbb{E}\!\left[h_j(W)\mid \vec{U}=\vec{u}_j\right]}
     {\mathbb{E}\!\left[h_j(W)\mid \vec{U}=\vec{u}_i\right]
      - \mathbb{E}\!\left[h_j(W)\mid \vec{U}=\vec{u}_j\right]}\\
& \qquad \cdot
\frac{f_j(Z) - \mathbb{E}\!\left[f_j(Z)\mid \vec{U}=\vec{u}_j,a\right]}
     {\mathbb{E}\!\left[f_j(Z)\mid \vec{U}=\vec{u}_i,a\right]
      - \mathbb{E}\!\left[f_j(Z)\mid \vec{U}=\vec{u}_j,a\right]} \Big\},
\end{aligned}
\end{equation}
where the collections of random functions 
$\{h_j(W)\}_j$, $\{f_j(Z)\}_j$, and $\{t_j(Y)\}_j$ are chosen such that for \(j' \neq j \neq i\), 
\begin{equation}
\begin{aligned}
&\mathbb E\!\left[
\prod_{j\neq i} h_j(W)
\,\middle|\, \vec U=\vec u_i
\right] =
\prod_{j\neq i}
\mathbb E\!\left[
h_j(W)
\,\middle|\, \vec U=\vec u_i
\right];
\end{aligned}
\end{equation}
\begin{equation}
\begin{aligned}
&\mathbb E\!\left[
\prod_{j\neq i} f_j(Z)
\,\middle|\, \vec U=\vec u_i,a
\right] =
\prod_{j\neq i}
\mathbb E\!\left[
f_j(Z)
\,\middle|\, \vec U=\vec u_i,a
\right];
\end{aligned}
\end{equation}

\begin{equation}
\begin{aligned}
&\mathbb E\!\left[
\prod_{j\neq i} t_j(Y)
\,\middle|\, \vec U=\vec u_i,a,\vec \vec m
\right] =
\prod_{j\neq i}
\mathbb E\!\left[
t_j(Y)
\,\middle|\, \vec U=\vec u_i,a,\vec \vec m
\right].
\end{aligned}
\end{equation}

We include the identity functions on $W$, $Z$, and $Y$ in the collections $\{h_j(W)\}_j$, $\{f_j(Z)\}_j$, and $\{t_j(Y)\}_j$, respectively. When $\vec U$ is binary, these collections reduce to the identity functions on $W$, $Z$, and $Y$.

\section{Influence function for \(\psi_a\) in Figure~\ref{fig:primal_v3} without $\vec{U}$ hidden}\label{app:full_if}
\begin{proof}
We aim to show \(\varphi_{\mathrm{full}}(A,\vec{M},Y,\vec{U};\cdot)\) from Equation~\ref{eq:full_eif} is an influence function for \(\psi_a\) in Figure~\ref{fig:primal_v3} without $\vec{U}$ hidden (variables \(A,\vec{M}, Y, \vec{U}, W,Z\) are  all observed). In other words, that
\[
\left.\frac{d}{dt}\psi(P_\epsilon)\right|_{\epsilon=0}
=
\mathbb{E}_{A,\vec{M},Y,\vec{U},W,Z}\!\left[
\varphi_{\mathrm{full}}(A,\vec{M},Y,\vec{U};\cdot)\,
s_\epsilon(A,\vec{M},Y,\vec{U},W,Z)
\right],\]
where score for regular parametric submodel $p_\epsilon(A,\vec{M},Y,\vec{U},W,Z)$ is defined as \[
s_\epsilon(A,\vec{M},Y,\vec{U},W,Z)=\left.\frac{d}{d\epsilon}\log p_\epsilon(A,\vec{M},Y,\vec{U},W,Z)\right|_{\epsilon=0}
\] i.e., \[p_\epsilon(A,\vec{M},Y,\vec{U},W,Z) 
= p(A,\vec{M},Y,\vec{U},W,Z) \big(1 + \epsilon s_\epsilon(A,\vec{M},Y,\vec{U},W,Z)\big) + o(\epsilon).\]

We can construct a regular parametric submodel for the marginal law through marginalization of a regular parametric submodel for the full data law.
\begin{align*}
p_\epsilon(A,\vec{M},Y,\vec{U}) 
&= \sum_{w,z} [p(A, \vec{M},Y,\vec{U},w,z) \big(1 + \epsilon s_\epsilon(A,\vec{M},Y,\vec{U},w,z)\big) + o(\epsilon)] \\
&= p(A,\vec{M},Y,\vec{U}) \left[ 1 + \epsilon \, \mathbb{E}[s_\epsilon(A,\vec{M},Y,\vec{U},W,Z) \mid A,\vec{M},Y,\vec{U}] \right] + o(\epsilon)\\
&= p(A,\vec{M},Y,\vec{U}) \big(1 + \epsilon s_\epsilon(A,\vec{M},Y,\vec{U})\big) + o(\epsilon),
\end{align*}

i.e., \[
s_\epsilon(A,\vec{M},Y,\vec{U})=\left.\frac{d}{d\epsilon}\log p_\epsilon(A,\vec{M},Y,\vec{U})\right|_{\epsilon=0}.
\]

\citet{fulcher2020robust} already showed that
\[
\left.\frac{d}{dt}\psi(P_{\epsilon})\right|_{\epsilon=0}
=
\mathbb{E}_{A,\vec{M},Y,\vec{U}}\!\left[
\varphi_{\mathrm{full}}(A,\vec{M},Y,\vec{U};\cdot)\,
s_{\epsilon}(A,\vec{M},Y,\vec{U})
\right].\]

Then by Assumption~\ref{assump:ci_kp}, which also implies the factorization
\[
p(A,\vec{M},Y,\vec{U},W,Z)
=
p(W \mid \vec{M},A,\vec{U})\, p(Z \mid \vec{M},A,\vec{U})\, p(A,\vec{M},Y,\vec{U}),
\]
we have
\[
s_{\epsilon}(A,\vec{M},Y,\vec{U},W,Z)
=s_{\epsilon}(A,\vec{M},Y,\vec{U}) + s_{\epsilon}(W \mid \vec{M},A,\vec{U}) + s_{\epsilon}(Z \mid \vec{M},A,\vec{U}).
\]
Finally by iterated expectations,
\begin{align*}
&\left.\frac{d}{dt}\psi(P_{\epsilon})\right|_{\epsilon=0} =
\mathbb{E}_{A,\vec{M},Y,\vec{U}}\!\left[
\varphi_{\mathrm{full}}(A,\vec{M},Y,\vec{U};\cdot)\,
s_{\epsilon}(A,\vec{M},Y,\vec{U})\right] + 0 + 0 \\
& \qquad \qquad \qquad= \mathbb{E}_{A,\vec{M},Y,\vec{U}}\!\left[
\varphi_{\mathrm{full}}(A,\vec{M},Y,\vec{U};\cdot)\,
s_{\epsilon}(A,\vec{M},Y,\vec{U})\right] \\
& \qquad \qquad \qquad+ \mathbb{E}_{A,\vec{M},Y,\vec{U}}\!\left[
\varphi_{\mathrm{full}}(A,\vec{M},Y,\vec{U};\cdot)\,
\mathbb{E}[s_{\epsilon}(W \mid \vec{M},A,\vec{U}) \mid A, \vec{M}, Y, \vec{U}]\right] \\
& \qquad \qquad \qquad + \mathbb{E}_{A,\vec{M},Y,\vec{U}}\!\left[
\varphi_{\mathrm{full}}(A,\vec{M},Y,\vec{U};\cdot)\,
\mathbb{E}[s_{\epsilon}(Z \mid \vec{M},A,\vec{U}) \mid A, \vec{M}, Y, \vec{U}]\right]\\
& \qquad \qquad \qquad=
\mathbb{E}_{A,\vec{M},Y,\vec{U},W,Z}\!\left[
\varphi_{\mathrm{full}}(A,\vec{M},Y,\vec{U};\cdot)\,
s_{\epsilon}(A,\vec{M},Y,\vec{U},W,Z)
\right].
\end{align*}
\end{proof}

\section{Proof of Theorem~\ref{thm:obs_if}}\label{app:obs_if}
\begin{proof}
A regular parametric submodel for the observed data law is induced by a regular parametric submodel for the full data law through marginalization. 
\begin{align*}
p_\epsilon(A,\vec{M},Y,W,Z) 
&= \sum_up_\epsilon(A,\vec{M},Y,\vec{U},W,Z)  \\
&= \sum_u [p(A, \vec{M},Y,u,W,Z) \big(1 + \epsilon s_\epsilon(A,\vec{M},Y,u,W,Z)\big) + o(\epsilon)] \\
&= p(A,\vec{M},Y,W,Z) \left[ 1 + \epsilon \, \mathbb{E}[s_\epsilon(A,\vec{M},Y,\vec{U},W,Z) \mid A,\vec{M},Y,W,Z] \right] + o(\epsilon)\\
&= p(A,\vec{M},Y,W,Z) \big(1 + \epsilon s_\epsilon(A,\vec{M},Y,W,Z)\big) + o(\epsilon),
\end{align*}
i.e.,
\[
s_\epsilon(A,\vec{M},Y,W,Z)=\left.\frac{d}{d\epsilon}\log p_\epsilon(A,\vec{M},Y,W,Z)\right|_{\epsilon=0}.
\] 

We aim to show \(\varphi_{\mathrm{obs}}(A,\vec{M},Y,W,Z;\cdot)\) from Equation~\ref{eq:obs_if} is an influence function for \(\psi_a\) in Figure~\ref{fig:primal_v3} with $\vec{U}$ hidden ($A,\vec{M}, Y, W, Z$ observed):
\[
\left.\frac{d}{dt}\psi(P_\epsilon)\right|_{\epsilon=0}
=
\mathbb{E}_{A,\vec{M},Y,W,Z}\!\left[
\varphi_{\mathrm{obs}}(A,\vec{M},Y,W,Z;\cdot)\,
s_\epsilon(A,\vec{M},Y,W,Z)
\right].\]

We already know from Appendix~\ref{app:full_if} that
\[
\left.\frac{d}{dt}\psi(P_{\epsilon})\right|_{\epsilon=0}
=
\mathbb{E}_{A,\vec{M},Y,\vec{U},W,Z}\!\left[
\varphi_{\mathrm{full}}(A,\vec{M},Y,\vec{U};\cdot)\,
s_{\epsilon}(A,\vec{M},Y,\vec{U},W,Z)
\right].
\]

Breaking up \(s_{\epsilon}(A,\vec{M},Y,\vec{U},W,Z)\),
\begin{align*}
\left.\frac{d}{dt}\psi(P_{\epsilon})\right|_{\epsilon=0}
&=
\mathbb{E}\!\left[
\varphi_{\mathrm{full}}(A,\vec{M},Y,\vec{U};\cdot)\,
s_{\epsilon}(A,\vec{M},Y,\vec{U})
\right]  \\
&\quad +
\mathbb{E}\!\left[
\varphi_{\mathrm{full}}(A,\vec{M},Y,\vec{U};\cdot)\,
s_{\epsilon}(W \mid \vec{M},A,\vec{U})
\right] \\
&\quad +
\mathbb{E}\!\left[
\varphi_{\mathrm{full}}(A,\vec{M},Y,U;\cdot)\,
s_{\epsilon}(Z \mid \vec{M},A,\vec{U})
\right] \\[6pt]
&=
\mathbb{E}\!\left[
\mathbb{E}\!\left\{
\varphi_{\mathrm{full}}(A,\vec{M},Y,\vec{U};\cdot)
\,\middle|\, A,\vec{M},Y,\vec{U}
\right\}
\, s_{\epsilon}(A,\vec{M},Y,\vec{U})
\right] \\
&\quad +
\mathbb{E}\!\left[
\mathbb{E}\!\left\{
\varphi_{\mathrm{full}}(A,\vec{M},Y,\vec{U};\cdot)
\,\middle|\, \vec{M},A,\vec{U}, W
\right\}
\, s_{\epsilon}(W \mid \vec{M},A,\vec{U})
\right] \\
&\quad +
\mathbb{E}\!\left[
\mathbb{E}\!\left\{
\varphi_{\mathrm{full}}(A,\vec{M},Y,\vec{U};\cdot)
\,\middle|\, \vec{M},A,\vec{U}, Z
\right\}
\, s_{\epsilon}(Z \mid \vec{M},A,\vec{U})
\right].
\end{align*}

Hence, it suffices to verify that the following conditional expectations coincide:
\begin{enumerate}
\item
\(
\mathbb{E}\!\left[
\varphi_{\mathrm{obs}}(A,\vec{M},Y,W,Z;\cdot)
\,\middle|\, A,\vec{M},Y,\vec{U}
\right]
=
\mathbb{E}\!\left[
\varphi_{\mathrm{full}}(A,\vec{M},Y,\vec{U};\cdot)
\,\middle|\, A,\vec{M},Y,\vec{U}
\right]
\)

\item
\(
\mathbb{E}\!\left[
\varphi_{\mathrm{obs}}(A,\vec{M},Y,W,Z;\cdot)
\,\middle|\, \vec{M},A,\vec{U}, W
\right]
=
\mathbb{E}\!\left[
\varphi_{\mathrm{full}}(A,\vec{M},Y,\vec{U};\cdot)
\,\middle|\, \vec{M},A,\vec{U}, W
\right]
\)

\item
\(
\mathbb{E}\!\left[
\varphi_{\mathrm{obs}}(A,\vec{M},Y,W,Z;\cdot)
\,\middle|\, \vec{M},A,\vec{U},Z
\right]
=
\mathbb{E}\!\left[
\varphi_{\mathrm{full}}(A,\vec{M},Y,\vec{U};\cdot)
\,\middle|\, \vec{M},A,\vec{U}, Z
\right]
\)
\end{enumerate}

because then
\begin{align*}
&\left.\frac{d}{dt}\psi(P_{\epsilon})\right|_{\epsilon=0}
=
\mathbb{E}_{A,\vec{M},Y,\vec{U},W,Z}\!\left[
\varphi_{\mathrm{obs}}(A,\vec{M},Y,W,Z;\cdot)\,
s_{\epsilon}(A,\vec{M},Y,\vec{U},W,Z)
\right] \\
& \qquad \qquad \qquad =
\mathbb{E}_{A,\vec{M},Y,W,Z}\!\left[
\varphi_{\mathrm{obs}}(A,\vec{M},Y,W,Z;\cdot)\,
\mathbb{E}[s_{\epsilon}(A,\vec{M},Y,\vec{U},W,Z)\mid A,\vec{M},Y,W,Z]\right]  \\
& \qquad \qquad \qquad =
\mathbb{E}_{A,\vec{M},Y,W,Z}\!\left[
\varphi_{\mathrm{obs}}(A,\vec{M},Y,W,Z;\cdot)\,
s_{\epsilon}(A,\vec{M},Y,W,Z)\right].
\end{align*}

Indeed, (1)-(3) hold noting Equation~\ref{eqn:weight_prop1} and~\ref{eqn:weight_prop2}, as well as Assumption~\ref{assump:ci_kp}:

\begin{enumerate}
\item
\(
\mathbb{E}\!\left[
\varphi_{\mathrm{obs}}(A,\vec{M},Y,W,Z;\cdot)
\,\middle|\, A,\vec{M},Y,\vec{U}
\right]
=
\mathbb{E}\!\left[
\varphi_{\mathrm{full}}(A,\vec{M},Y,\vec{U};\cdot)
\,\middle|\, A,\vec{M},Y,\vec{U}
\right]
=\varphi_{\mathrm{full}}(A,\vec{M},Y,\vec{U};\cdot)
\)

\item
\(\mathbb{E}\!\left[
\varphi_{\mathrm{obs}}(A,\vec{M},Y,W,Z;\cdot)
\,\middle|\, \vec{M},A,\vec{U},W
\right] =
\mathbb{E}\!\left[
\varphi_{\mathrm{full}}(A,\vec{M},Y,\vec{U};\cdot)
\,\middle|\, \vec{M},A,\vec{U},W
\right] = \varphi_{\mathrm{full}_2}(\vec{M},A,\vec{U})) + 
\varphi_{\mathrm{full}_3}(A,\vec{U})  - \psi_{a}\)

\item
\(
\mathbb{E}\!\left[
\varphi_{\mathrm{obs}}(A,\vec{M},Y,W,Z;\cdot)
\,\middle|\, \vec{M},A,\vec{U},Z
\right]
=
\mathbb{E}\!\left[
\varphi_{\mathrm{full}}(A,\vec{M},Y,\vec{U};\cdot)
\,\middle|\, \vec{M},A,\vec{U},Z
\right] = \varphi_{\mathrm{full}_2}(\vec{M},A,\vec{U})) + 
\varphi_{\mathrm{full}_3}(A,\vec{U}) - \psi_{a}
\)
\end{enumerate}

\end{proof}

\section{Proof of Theorem~\ref{thm:multiple_robustness}}\label{app:multiple_robustness}
\begin{proof}
Solving Equation~\ref{eq:est_equation} with $\varphi_{obs}(A,\vec{M},Y,W,Z; \cdot)$ in Equation~\ref{eq:obs_if} gives

\begin{align*}
&\hat \psi_a^{(S3,IF)} = \mathbb{E}_n\big[\sum_{i}
\mathcal{C}_{WZ}(A,\vec{u}_i)\,
\varphi_{\mathrm{full}_1}(Y,\vec{M},A,\vec{u}_i) \big] \\
&\quad+
\mathbb{E}_n\!\Big[
\sum_i
\Big\{
\mathcal{C}_{WZ}(A,\vec{u}_i)
+
\mathcal{C}_{YW}(\vec{M},A,\vec{u}_i)
+
\mathcal{C}_{YZ}(\vec{M},A,\vec{u}_i)
-2\,\mathcal{C}_{WZY}(\vec{M},A,\vec{u}_i)
\Big\}  \\
&\qquad\qquad\qquad\qquad\cdot
\Big\{
\varphi_{\mathrm{full}_2}(\vec{M},A,\vec{u}_i)
+
\varphi_{\mathrm{full}_3}(A,\vec{u}_i)
\Big\}
\Big].
\end{align*}

In the following arguments, we use $\mathbb{E}_{n,\text{sub}}[\cdot \mid \cdot]$ to denote the empirical average taken only over the subsample of observations satisfying the conditioning event appearing after the vertical bar $\mid$. Standard regularity conditions are assumed.

\textbf{Case 1.} \(p(\vec{M} \mid A, \vec{U})\), \(\{\mathbb{E}[h_j(W) \mid \vec{U}]\}_j\), \(\{\mathbb{E}[f_j(Z) \mid \vec{U},A]\}_j\) are correctly specified.

Since $\{\mathbb{E}[h_j(W)\mid \vec{U}]\}_j$ and $\{\mathbb{E}[f_j(Z)\mid \vec{U},A]\}_j$ are correctly specified, and Assumption~\ref{assump:ci_kp} holds, 
\[
\mathbb{E}_{n, \text{sub}}\!\left[
\mathcal{C}_{WZ}(A,\vec{u}_i)
\,\middle|\,
Y,\vec{M},A,\vec{U}=\vec{u}_i
\right]
\xrightarrow{p}
\mathbb{E}\!\left[
\mathcal{C}_{WZ}(A,\vec{u}_i)
\,\middle|\,
Y,\vec{M},A,\vec{U}=\vec{u}_i
\right]
= 1, \text{ and}
\]
\[
\mathbb{E}_{n, \text{sub}}\!\left[
\mathcal{C}_{WZ}(A,\vec{u}_j)
\,\middle|\,
Y,\vec{M},A,\vec{U}=\vec{u}_i
\right]
\xrightarrow{p}
\mathbb{E}\!\left[
\mathcal{C}_{WZ}(A,\vec{u}_j)
\,\middle|\,
Y,\vec{M},A,\vec{U}=\vec{u}_i
\right]
= 0, j \neq i.
\]

Additionally, 

\begin{align*}
&\mathbb{E}_{n,\text{sub}}
[\mathcal{C}_{WZ}(A,\vec{u}_i)+ \mathcal{C}_{YW}(\vec{M},A,\vec{u}_i)+ \mathcal{C}_{YZ}(\vec{M},A,\vec{u}_i) -2\,\mathcal{C}_{WZY}(\vec{M},A,\vec{u}_i) \mid \vec{M},A,\vec{U} = \vec{u}_i] 
\xrightarrow{p} \\
&\mathbb{E}
[\mathcal{C}_{WZ}(A,\vec{u}_i)+ \mathcal{C}_{YW}(\vec{M},A,\vec{u}_i)+ \mathcal{C}_{YZ}(\vec{M},A,\vec{u}_i) -2\,\mathcal{C}_{WZY}(\vec{M},A,\vec{u}_i)\mid \vec{M},A,\vec{U}= \vec{u}_i] 
= 1, \text{ and}
\end{align*}
\begin{align*}
&\mathbb{E}_{n,\text{sub}}
[\mathcal{C}_{WZ}(A,\vec{u}_j)+ \mathcal{C}_{YW}(\vec{M},A,\vec{u}_j)+ \mathcal{C}_{YZ}(\vec{M},A,\vec{u}_j) -2\,\mathcal{C}_{WZY}(\vec{M},A,\vec{u}_j) \mid \vec{M},A,\vec{U} = \vec{u}_i] 
\xrightarrow{p} \\
&\mathbb{E}
[\mathcal{C}_{WZ}(A,\vec{u}_j)+ \mathcal{C}_{YW}(\vec{M},A,\vec{u}_j)+ \mathcal{C}_{YZ}(\vec{M},A,\vec{u}_j) -2\,\mathcal{C}_{WZY}(\vec{M},A,\vec{u}_j)\mid \vec{M},A,\vec{U}= \vec{u}_i] 
= 0, j \neq i.
\end{align*}

\citet{fulcher2020robust} show that, provided $p(\vec{M}\mid A,\vec{U})$ is correctly specified,
\[
\mathbb{E}_n\!\Big[
\varphi_{\mathrm{full}_1}(Y,\vec{M},A,\vec{U})
+
\varphi_{\mathrm{full}_2}(\vec{M},A,\vec{U})
+
\varphi_{\mathrm{full}_3}(A,\vec{U})
\Big]
\xrightarrow{p}
\psi_a .
\]

Combining these results, it follows that
\begin{align*}
\hat \psi_a^{(S3, IF)}
&=
\mathbb{E}_n\!\Big[
\mathbb{E}_{n,\mathrm{sub}}\!\Big[
\sum_i
\mathcal{C}_{WZ}(A,\vec{u}_i)\,
\varphi_{\mathrm{full}_1}(Y,\vec{M},A,\vec{u}_i)
\Big]
\ \Big|\ Y,\vec{M},A,\vec{U}
\Big] \\
&\quad+
\mathbb{E}_n\!\Big[
\mathbb{E}_{n,\mathrm{sub}}\!\Big[
\sum_i
\Big\{
\mathcal{C}_{WZ}(A,\vec{u}_i)
+
\mathcal{C}_{YW}(\vec{M},A,\vec{u}_i)
+
\mathcal{C}_{YZ}(\vec{M},A,\vec{u}_i)
-2\,\mathcal{C}_{WZY}(\vec{M},A,\vec{u}_i)
\Big\}  \\
&\qquad\qquad\qquad\qquad\cdot
\Big\{
\varphi_{\mathrm{full}_2}(\vec{M},A,\vec{u}_i)
+
\varphi_{\mathrm{full}_3}(A,\vec{u}_i)
\Big\}
\Big]
\ \Big|\ \vec{M},A,\vec{U}
\Big]
\xrightarrow{p}
\psi_a .
\end{align*}

\textbf{Case 2.} \(p(\vec{M} \mid A, \vec{U})\), \(\{\mathbb{E}[h_j(W) \mid \vec{U}]\}_j\), \(\{\mathbb{E}[t_j(Y) \mid \vec{M}, A, \vec{U}]\}_j\) are correctly specified.

Since \(\{\mathbb{E}[h_j(W) \mid \vec{U}]\}_j\) is correctly specified, and Assumption~\ref{assump:ci_kp} holds, 
\[
\mathbb{E}_{n,\text{sub}}\big[
\mathcal{C}_{WZ}(A,\vec{u}_i) \mid Y, \vec{M},A,\vec{U}=\vec{u}_i\big] \xrightarrow{p} k, \text{ where } k \text{ is a constant, and }
\]
\[
\mathbb{E}_{n,\text{sub}}\big[
\mathcal{C}_{WZ}(A,\vec{u}_j) \mid Y, \vec{M},A,\vec{U}=\vec{u}_i\big] \xrightarrow{p} 0, j \neq i.
\]

Then, since \(\{\mathbb{E}[t_j(Y) \mid \vec{M}, A, \vec{U}]\}_j\) (thereby $\mathbb{E}[Y \mid \vec{M},A,\vec{U}]$) is correctly specified,
\begin{align*}
\mathbb{E}_n\!\Big[
\mathbb{E}_{n,\mathrm{sub}}\!\Big[
\sum_i
\mathcal{C}_{WZ}(A,\vec{u}_i)\,
\varphi_{\mathrm{full}_1}(Y,\vec{M},A,\vec{u}_i)
\Big]
\ \Big|\ Y,\vec{M},A,\vec{U}
\Big] \xrightarrow{p} 0.
\end{align*}

Since \(\{\mathbb{E}[h_j(W) \mid \vec{U}]\}_j\), \(\{\mathbb{E}[t_j(Y) \mid \vec{M}, A, \vec{U}]\}_j\) are correctly specified, and Assumption~\ref{assump:ci_kp} holds,
\begin{align*}
&\mathbb{E}_n
[\mathcal{C}_{WZ}(A,\vec{u}_i)+ \mathcal{C}_{YW}(\vec{M},A,\vec{u}_i)+ \mathcal{C}_{YZ}(\vec{M},A,\vec{u}_i) -2\,\mathcal{C}_{WZY}(\vec{M},A,\vec{u}_i) \mid \vec{M},A,\vec{U} = \vec{u}_i] 
\xrightarrow{p} \\
&\mathbb{E}
[\mathcal{C}_{WZ}(A,\vec{u}_i)+ \mathcal{C}_{YW}(\vec{M},A,\vec{u}_i)+ \mathcal{C}_{YZ}(\vec{M},A,\vec{u}_i) -2\,\mathcal{C}_{WZY}(\vec{M},A,\vec{u}_i)\mid \vec{M},A,\vec{U}= \vec{u}_i] 
= 1, \text{ and }
\end{align*}
\begin{align*}
&\mathbb{E}_{n,\text{sub}}
[\mathcal{C}_{WZ}(A,\vec{u}_j)+ \mathcal{C}_{YW}(\vec{M},A,\vec{u}_j)+ \mathcal{C}_{YZ}(\vec{M},A,\vec{u}_j) -2\,\mathcal{C}_{WZY}(\vec{M},A,\vec{u}_j) \mid \vec{M},A,\vec{U} = \vec{u}_i] 
\xrightarrow{p} \\
&\mathbb{E}
[\mathcal{C}_{WZ}(A,\vec{u}_j)+ \mathcal{C}_{YW}(\vec{M},A,\vec{u}_j)+ \mathcal{C}_{YZ}(\vec{M},A,\vec{u}_j) -2\,\mathcal{C}_{WZY}(\vec{M},A,\vec{u}_j)\mid \vec{M},A,\vec{U}= \vec{u}_i] 
= 0, j \neq i.
\end{align*}

\citet{fulcher2020robust} show that, provided $p(\vec{M}\mid A,\vec{U})$ is correctly specified,
\[
\mathbb{E}_n\!\Big[
\varphi_{\mathrm{full}_2}(\vec{M},A,\vec{U})
+
\varphi_{\mathrm{full}_3}(A,\vec{U})
\Big]
\xrightarrow{p}
\psi_a .
\]

Combining these results, it follows that
\begin{align*}
\hat \psi_a^{(S3, IF)}
&=
\mathbb{E}_n\!\Big[
\mathbb{E}_{n,\mathrm{sub}}\!\Big[
\sum_i
\mathcal{C}_{WZ}(A,\vec{u}_i)\,
\varphi_{\mathrm{full}_1}(Y,\vec{M},A,\vec{u}_i)
\Big]
\ \Big|\ Y,\vec{M},A,\vec{U}
\Big] \\
&\quad+
\mathbb{E}_n\!\Big[
\mathbb{E}_{n,\mathrm{sub}}\!\Big[
\sum_i
\Big\{
\mathcal{C}_{WZ}(A,\vec{u}_i)
+
\mathcal{C}_{YW}(\vec{M},A,\vec{u}_i)
+
\mathcal{C}_{YZ}(\vec{M},A,\vec{u}_i)
-2\,\mathcal{C}_{WZY}(\vec{M},A,\vec{u}_i)
\Big\}  \\
&\qquad\qquad\qquad\qquad\cdot
\Big\{
\varphi_{\mathrm{full}_2}(\vec{M},A,\vec{u}_i)
+
\varphi_{\mathrm{full}_3}(A,\vec{u}_i)
\Big\}
\Big]
\ \Big|\ \vec{M},A,\vec{U}
\Big]
\xrightarrow{p}
0 + \psi_a  = \psi_a.
\end{align*}

\textbf{Case 3.} \(p(\vec{M} \mid A, \vec{U})\), \(\{\mathbb{E}[f_j(Z) \mid \vec{U},A]\}_j\), \(\{\mathbb{E}[t_j(Y) \mid \vec{M},A,\vec{U}]\}_j\) are correctly specified.

The same reasoning as in Case~2 applies, replacing \(\{\mathbb{E}[h_j(W) \mid \vec{U}]\}_j\) with \(\{\mathbb{E}[f_j(Z) \mid \vec{U},A]\}_j\).

\textbf{Case 4.} \(\{\mathbb{E}[t_j(Y) \mid \vec{M},A,\vec{U}]\}_j\), \(\pi(A \mid \vec{U})\), \(\{\mathbb{E}[h_j(W) \mid \vec{U}]\}_j\) are correctly specified.

The same reasoning as Case 2 applies, noting \citet{fulcher2020robust} show that, provided \(\{\mathbb{E}[t_j(Y) \mid \vec{M},A,\vec{U}]\}_j\) and \(\pi(A \mid \vec{U})\) are correctly specified,
\[
\mathbb{E}_n\!\Big[
\varphi_{\mathrm{full}_2}(\vec{M},A,\vec{U})
+
\varphi_{\mathrm{full}_3}(A,\vec{U})
\Big]
\xrightarrow{p}
\psi_a .
\]

\textbf{Case 5.} \(\{\mathbb{E}[t_j(Y) \mid \vec{M},A,\vec{U}]\}_j\), \(\pi(A \mid \vec{U})\), \(\{\mathbb{E}[f_j(Z) \mid \vec{U},A]\}_j\) are correctly specified.

The same reasoning as in Case~4 applies, replacing \(\{\mathbb{E}[h_j(W) \mid \vec{U}]\}_j\) with \(\{\mathbb{E}[f_j(Z) \mid \vec{U},A]\}_j\).

\end{proof}

\section{Proof of Theorem~\ref{thm:asymp_normality}}\label{app:asymp_normality}
\begin{proof}
Let $\varphi = \varphi(O; \psi_a, \eta)$ from Theorem~\ref{thm:obs_if}. Expand \[
\hat{\psi_a} - \psi_a 
= \mathbb{P}_n \hat{\varphi} - \mathbb{P}\varphi
= (\mathbb{P}_n - \mathbb{P})(\hat{\varphi} - \varphi) + \mathbb{P}(\hat{\varphi} - \varphi) + (\mathbb{P}_n - \mathbb{P})\varphi
= R_1 + R_2 + (\mathbb{P}_n - \mathbb{P})\varphi,
\]
where 
\(
R_1 = (\mathbb{P}_n - \mathbb{P})(\hat{\varphi} - \varphi), 
\quad 
R_2 = \mathbb{P}(\hat{\varphi} - \varphi).
\)

$R_1 = o_p(n^{-1/2})$ under consistency of nuisance functions and sample splitting by Lemma 2 in \citet{kennedy2020sharp}.

Below we show $R_2 = o_p(n^{-1/2})$ under consistency of nuisance functions and conditions (1)-(5), therefore
\[
\hat{\psi_a} - \psi_a
= (\mathbb{P}_n - \mathbb{P})\varphi + o_p(n^{-1/2})
= \mathbb{P}_n\varphi + o_p(n^{-1/2})
\]
which is equivalent to \(
\sqrt{n}\,(\hat{\psi_a}-\psi_a) \;\;\overset{d}{\longrightarrow}\;\; 
\mathcal{N}\!\big(0, \,\mathbb{E}[\varphi^2]\big).
\)

Note that 
\begin{align*}
&R_2 =
\mathbb{P}\!\left[\hat{\varphi}-\varphi\right]\\
& \quad =
\sum_i\mathbb{E}\Big[
\hat{\mathcal C}_{WZ}(\vec{A,u_i})\,
\hat{\varphi}_{\mathrm{full},1}(Y,\vec{M},A,\vec{u_i})
\big] \\
& \quad
+\sum_i\mathbb{E}\Big[
\{\hat{\mathcal C}_{WZ}(\vec{A,u_i}) + \hat{\mathcal C}_{YW}(\vec{M},A,\vec{u_i}) + \hat{\mathcal C}_{YZ}(\vec{M},A,\vec{u_i}) -2 \hat{\mathcal C}_{YWZ}(\vec{M},A,\vec{u_i})\}\{\hat{\varphi}_{\mathrm{full},2}(\vec{M},A,\vec{u_i})
+
\hat{\varphi}_{\mathrm{full},3}(A,\vec{u_i})\}
\Big] -\psi_a.
\end{align*}

Consider the first term 
\begin{align*}
&\sum_i\mathbb{E}\Big[
\hat{\mathcal C}_{WZ}(A.\vec{u_i})\,
\hat{\varphi}_{\mathrm{full},1}(Y,\vec{M},A,\vec{u_i})
\big] \\
&= \sum_i\mathbb{E}\Big\{\frac{\hat p(\vec{M} \mid A=a, \vec{u_i})}{\hat p(\vec{M} \mid A, \vec{u_i})}\mathbb{E}[\mathcal{\hat C}_{WZ}(A,\vec{u_i})\{Y-\hat{\mu}(\vec{M},A,\vec{u_i})\}\mid \vec{M},A,\vec{U}]\Big\} \\
&= \sum_i\mathbb{E} \Big\{\frac{\hat p(\vec{M} \mid A=a, \vec{u_i})}{\hat p(\vec{M} \mid A, \vec{u_i})}\mathbb{E}[\mathcal{\hat C}_{WZ}(A,\vec{u_i})\mid \vec{M},A,\vec{U}][\mu(\vec{M},A,\vec{U})-\hat{\mu}(\vec{M}A,\vec{u_i})]\Big\}.
\end{align*}

Recall the following from Appendix~\ref{app:proxy_contrasts}:
\begin{align*}
&\mathbb{E}[\mathcal{\hat C}_{WZ}(\vec{A,u_i}) \mid \vec{M},A,\vec{U} = \vec{u_j}] \\
&= 
\prod_{j \neq i} \Big\{
\frac{\mathbb{E}\!\left[h_j(W)\mid \vec{U}=\vec{u_j}\right] - \hat{\mathbb{E}}\!\left[h_j(W)\mid \vec{U}=\vec{u_j}\right]}
     {\hat{\mathbb{E}}\!\left[h_j(W)\mid \vec{U}=\vec{u_i}\right]
      - \hat{\mathbb{E}}\!\left[h_j(W)\mid \vec{U}=\vec{u_j}\right]} \cdot
\frac{\mathbb{E}\!\left[g_j(Z)\mid \vec{U}=\vec{u_j},A\right]- \hat{\mathbb{E}}\!\left[g_j(Z)\mid \vec{U}=\vec{u_j},A\right]}
     {\hat{\mathbb{E}}\!\left[g_j(Z)\mid \vec{U}=\vec{u_i},A\right]
      - \hat{\mathbb{E}}\!\left[g_j(Z)\mid \vec{U}=\vec{u_j},A\right]} \Big\}.
\end{align*}

Then by (5), $\mathbb{E}[\mathcal{\hat C}_{WZ}(\vec{u_j})\mid \vec{M},A,\vec{u_j}] = o_p(n^{-1/2})$.

Additionally we have that
\begin{align*}
&\mathbb{E}[\mathcal{\hat C}_{WZ}(\vec{A,u_i}) \mid \vec{M},A,\vec{U} = \vec{u_i}] \\
&= 
\prod_{j \neq i} \Big\{
\frac{\mathbb{E}\!\left[h_j(W)\mid \vec{U}=\vec{u_i}\right] - \hat{\mathbb{E}}\!\left[h_j(W)\mid \vec{U}=\vec{u_j}\right]}
     {\hat{\mathbb{E}}\!\left[h_j(W)\mid \vec{U}=\vec{u_i}\right]
      - \hat{\mathbb{E}}\!\left[h_j(W)\mid \vec{U}=\vec{u_j}\right]} \cdot
\frac{\mathbb{E}\!\left[g_j(Z)\mid \vec{U}=\vec{u_i},A\right]- \hat{\mathbb{E}}\!\left[g_j(Z)\mid \vec{U}=\vec{u_j},A\right]}
     {\hat{\mathbb{E}}\!\left[g_j(Z)\mid \vec{U}=\vec{u_i},A\right]
      - \hat{\mathbb{E}}\!\left[g_j(Z)\mid \vec{U}=\vec{u_j},A\right]} \Big\} \\
&= 
\prod_{j \neq i} \Big\{
\frac{\mathbb{E}\!\left[h_j(W)\mid \vec{U}=\vec{u_i}\right]  - \mathbb{\hat E}[h_j(W)\mid \vec{U}=\vec{u_i}] + \mathbb{\hat E}[h_j(W)\mid \vec{U}=\vec{u_i}] - \hat{\mathbb{E}}\!\left[h_j(W)\mid \vec{U}=\vec{u_j}\right]}
     {\hat{\mathbb{E}}\!\left[h_j(W)\mid \vec{U}=\vec{u_i}\right]
      - \hat{\mathbb{E}}\!\left[h_j(W)\mid \vec{U}=\vec{u_j}\right]} \\
&\quad \cdot
\frac{\mathbb{E}\!\left[g_j(Z)\mid \vec{U}=\vec{u_i},A\right] - \mathbb{\hat E}\!\left[g_j(Z)\mid \vec{U}=\vec{u_i},A\right] + \mathbb{\hat E}\!\left[g_j(Z)\mid \vec{U}=\vec{u_i},A\right] - \hat{\mathbb{E}}\!\left[g_j(Z)\mid \vec{U}=\vec{u_j},A\right]}
     {\hat{\mathbb{E}}\!\left[g_j(Z)\mid \vec{U}=\vec{u_i},A\right]
      - \hat{\mathbb{E}}\!\left[g_j(Z)\mid \vec{U}=\vec{u_j},A\right]} \Big\} \\
&= 
\prod_{j \neq i} \Big\{
\frac{\mathbb{E}\!\left[h_j(W)\mid \vec{U}=\vec{u_i}\right]  - \mathbb{\hat E}[h_j(W)\mid \vec{U}=\vec{u_i}]}
     {\hat{\mathbb{E}}\!\left[h_j(W)\mid \vec{U}=\vec{u_i}\right]
      - \hat{\mathbb{E}}\!\left[h_j(W)\mid \vec{U}=\vec{u_j}\right]} +1 \Big\}\cdot
\Big\{
\frac{\mathbb{E}\!\left[g_j(Z)\mid \vec{U}=\vec{u_i},A\right] - \mathbb{\hat E}\!\left[g_j(Z)\mid \vec{U}=\vec{u_i},A\right]}
     {\hat{\mathbb{E}}\!\left[g_j(Z)\mid \vec{U}=\vec{u_i},A\right]
      - \hat{\mathbb{E}}\!\left[g_j(Z)\mid \vec{U}=\vec{u_j},A\right]} +1 \Big\}.
\end{align*}

By (3) and (4), $\mathbb{E}[\mathcal{\hat C}_{WZ}(\vec{A,u_i})\mid \vec{M},A,\vec{u_i}][\mu(\vec{M},A,\vec{u_i})-\hat{\mu}(\vec{M}A,\vec{u_i})] = o_p(n^{-1/2})$. Hence, the first term is \(\sqrt{n}\)-consistent, i.e., \(\sum_i\mathbb{E}\Big[
\hat{\mathcal C}_{WZ}(\vec{A,u_i})\,
\hat{\varphi}_{\mathrm{full},1}(Y,\vec{M},A,\vec{u_i})
\big] = o_p(n^{-1/2})\).

Now consider the second term 
\begin{align*}
&\sum_i\mathbb{E}\Big[
\{\hat{\mathcal C}_{WZ}(\vec{A,u_i}) + \hat{\mathcal C}_{YW}(\vec{M},A,\vec{u_i}) + \hat{\mathcal C}_{YZ}(\vec{M},A,\vec{u_i}) -2 \hat{\mathcal C}_{YWZ}(\vec{M},A,\vec{u_i})\}\{\hat{\varphi}_{\mathrm{full},2}(\vec{M},A,\vec{u_i})
+
\hat{\varphi}_{\mathrm{full},3}(A,\vec{u_i})\} \Big] -\psi_a \\
&= \sum_i\mathbb{E}\Big[\{\hat{\varphi}_{\mathrm{full},2}(\vec{M},A,\vec{u_i}) 
+
\hat{\varphi}_{\mathrm{full},3}(A,\vec{u_i})\}
\{\mathbb{E}[\hat{\mathcal C}_{WZ}(A,\vec{u_i}) \mid \vec{M},A,\vec{U}] + \mathbb{E}[\hat{\mathcal C}_{YW}(\vec{M},A,\vec{u_i}) \mid \vec{M},A,\vec{U}] \\
&\quad + \mathbb{E}[\hat{\mathcal C}_{YZ}(\vec{M},A,\vec{u_i})\mid \vec{M},A,\vec{U}] -2 \mathbb{E}[\hat{\mathcal C}_{YWZ}(\vec{M},A,\vec{u_i})\mid \vec{M},A,\vec{U}]\} \Big] -\psi_a.
\end{align*}

By (3)-(5), $\mathbb{E}[\mathcal{\hat C}_{WZ}(A,\vec{u_j})\mid \vec{M},A,\vec{u_j}] = o_p(n^{-1/2})$, $\mathbb{E}[\mathcal{\hat C}_{YW}(\vec{M},A,\vec{u_j})\mid \vec{M},A,\vec{u_j}] = o_p(n^{-1/2})$, $\mathbb{E}[\mathcal{\hat C}_{YZ}(\vec{M},A,\vec{u_j})\mid \vec{M},A,\vec{u_j}]= o_p(n^{-1/2})$, and $\mathbb{E}[\mathcal{\hat C}_{YWZ}(\vec{M},A,\vec{u_j})\mid \vec{M},A,\vec{u_j}] = o_p(n^{-1/2})$.

Additionally, we have that
\begin{align*}
&\mathbb{E}[\mathcal{\hat C}_{WZ}(A,\vec{u_i}) \mid \vec{M},A,\vec{U} = \vec{u_i}] \\
&= 
\prod_{j \neq i} \Big\{
\frac{\mathbb{E}\!\left[h_j(W)\mid \vec{U}=\vec{u_i}\right]  - \mathbb{\hat E}[h_j(W)\mid \vec{U}=\vec{u_i}]}
     {\hat{\mathbb{E}}\!\left[h_j(W)\mid \vec{U}=\vec{u_i}\right]
      - \hat{\mathbb{E}}\!\left[h_j(W)\mid \vec{U}=\vec{u_j}\right]} +1 \Big\}\cdot
\Big\{
\frac{\mathbb{E}\!\left[g_j(Z)\mid \vec{U}=\vec{u_i},A\right] - \mathbb{\hat E}\!\left[g_j(Z)\mid \vec{U}=\vec{u_i},A\right]}
     {\hat{\mathbb{E}}\!\left[g_j(Z)\mid \vec{U}=\vec{u_i},A\right]
      - \hat{\mathbb{E}}\!\left[g_j(Z)\mid \vec{U}=\vec{u_j},A\right]} +1 \Big\},
\end{align*}
and likewise for \(\mathbb{E}[\mathcal{\hat C}_{YW}(\vec{M}, A, \vec{u_i}) \mid \vec{M},A,\vec{U} = \vec{u_i}]\) and \(\mathbb{E}[\mathcal{\hat C}_{YZ}(\vec{M},A,\vec{u_i}) \mid \vec{M},A,\vec{U} = \vec{u_i}]\). 

Similarly,
\begin{align*}
&\mathbb{E}[\mathcal{\hat C}_{YWZ}(\vec{M},A,\vec{u_i}) \mid \vec{M},A,\vec{U} = \vec{u_i}] \\
&= 
\prod_{j \neq i} 
\Big\{
\frac{\mathbb{E}\!\left[t_j(Y)\mid \vec{M},A,\vec{U}=\vec{u_i}\right] - \mathbb{\hat E}\!\left[t_j(Y)\mid \vec{M},A,\vec{U}=\vec{u_i}\right]}
     {\hat{\mathbb{E}}\!\left[t_j(Y)\mid \vec{M},A,\vec{U}=\vec{u_i}\right]
      - \hat{\mathbb{E}}\!\left[t_j(Y)\mid \vec{M},A,\vec{U}=\vec{u_j}\right]} +1 \Big\} \cdot
      \Big\{
\frac{\mathbb{E}\!\left[h_j(W)\mid \vec{U}=\vec{u_i}\right]  - \mathbb{\hat E}[h_j(W)\mid \vec{U}=\vec{u_i}]}
     {\hat{\mathbb{E}}\!\left[h_j(W)\mid \vec{U}=\vec{u_i}\right]
      - \hat{\mathbb{E}}\!\left[h_j(W)\mid \vec{U}=\vec{u_j}\right]} +1 \Big\} \\
&\quad \cdot
\Big\{
\frac{\mathbb{E}\!\left[g_j(Z)\mid \vec{U}=\vec{u_i},A\right] - \mathbb{\hat E}\!\left[g_j(Z)\mid \vec{U}=\vec{u_i},A\right]}
     {\hat{\mathbb{E}}\!\left[g_j(Z)\mid \vec{U}=\vec{u_i},A\right]
      - \hat{\mathbb{E}}\!\left[g_j(Z)\mid \vec{U}=\vec{u_j},A\right]} +1 \Big\}.
\end{align*}

By (3)--(5),
\[
\begin{aligned}
&\mathbb{E}[\hat{\mathcal C}_{WZ}(A,\vec u_i) \mid \vec M,A,\vec u_i]
+\mathbb{E}[\hat{\mathcal C}_{YW}(\vec{M},A,\vec u_i) \mid \vec M,A,\vec u_i] \\
&\quad
+\mathbb{E}[\hat{\mathcal C}_{YZ}(\vec{M},A,\vec u_i)\mid \vec M,A,\vec u_i]
-2\mathbb{E}[\hat{\mathcal C}_{YWZ}(\vec{M},A,\vec u_i)\mid \vec M,A,\vec u_i]
= 1 + o_p(n^{-1/2}).
\end{aligned}
\]

Thus, we have for the second term 
\begin{align*}
&\sum_i\mathbb{E}\Big[
\{\hat{\mathcal C}_{WZ}(A,\vec{u_i}) + \hat{\mathcal C}_{YW}(\vec{M},A,\vec{u_i}) + \hat{\mathcal C}_{YZ}(\vec{M},A,\vec{u_i}) -2 \hat{\mathcal C}_{YWZ}(\vec{M},A,\vec{u_i})\}\{\hat{\varphi}_{\mathrm{full},2}(\vec{M},A,\vec{u_i})
+
\hat{\varphi}_{\mathrm{full},3}(A,\vec{u_i})\} \Big] -\psi_a \\
&= \sum_i\mathbb{E}\Big[\{\hat{\varphi}_{\mathrm{full},2}(\vec{M},A,\vec{u_i}) 
+
\hat{\varphi}_{\mathrm{full},3}(A,\vec{u_i})\}
\{\mathbb{E}[\hat{\mathcal C}_{WZ}(A,\vec{u_i}) \mid \vec{M},A,\vec{U}] + \mathbb{E}[\hat{\mathcal C}_{YW}(\vec{M},A,\vec{u_i}) \mid \vec{M},A,\vec{U}] \\
&\quad + \mathbb{E}[\hat{\mathcal C}_{YZ}(\vec{M},A,\vec{u_i})\mid \vec{M},A,\vec{U}] -2 \mathbb{E}[\hat{\mathcal C}_{YWZ}(\vec{M},A,\vec{u_i})\mid \vec{M},A,\vec{U}]\} \Big] -\psi_a  \\
&= \sum_i\mathbb{E}\Big[\{\hat{\varphi}_{\mathrm{full},2}(\vec{M},A,\vec{u_i}) 
+
\hat{\varphi}_{\mathrm{full},3}(A,\vec{u_i})\}
\mathbf{1}\{\vec U = \vec u_i\} \Big] -\psi_a + o_p(n^{-1/2}) \\
&= \mathbb{E}\Big[\hat{\varphi}_{\mathrm{full},2}(\vec{M},A,\vec{U}) 
+
\hat{\varphi}_{\mathrm{full},3}(A,\vec{U})
\Big] -\psi_a + o_p(n^{-1/2}).
\end{align*}

Theorem 5.1 from \citet{guo2026flexiblefrontdoor} showed that under (1) and (2), \(\mathbb{E}\Big[\hat{\varphi}_{\mathrm{full},2}(\vec{M},A,\vec{U}) 
+
\hat{\varphi}_{\mathrm{full},3}(A,\vec{U})
\Big] -\psi_a= o_p(n^{-1/2})\). Hence, the second term is also \(\sqrt{n}\)-consistent.

Altogether, \(R_2 = o_p(n^{-1/2})\) as desired.
\end{proof}

\section{Experiments: Finite-Support DGP Data Generation }\label{app:DGP}

The setup is based on the causal model reflected in Fig.\ref{fig:primal_v3}:
\begin{align}
& U \sim Ber(0.1) \\
& H \sim Ber(0.8) \\
& A \mid UH \sim Ber(0.4 + 0.2 * U +0.1*H) \\ 
& M \mid AU \sim Ber(0.8 - 0.5*xor(U,A)+0.1*A) \\ 
& Y \mid MUH \sim Ber(max(min(\\& \quad 0.5-0.4*U+0.6*M-0.1*H,0.9),0.1)) \\
& Z \mid U \sim Ber(0.3+0.5*U) \\ 
& W \mid U \sim Ber(0.9-0.7*U) 
\end{align}

\section{Experiments: Mixed DGP Data Generation }\label{app:DGP2}

The setup is based on the causal model reflected in Fig.~\ref{fig:primal_v3}:
\begin{align}
& U_1 \sim Ber(0.4) \\
& U_2 \sim N(0,1) \\
& A \mid U_1,U_2 \sim Ber\!\left(\expit(-0.5 + 0.8U_1 + 0.6U_2)\right) \\
& M \mid A,U_1 \sim Ber\!\left(\expit(-0.3 + 1.3U_1 + 0.9A)\right) \\
& Y \mid M,U_1,U_2 \sim N(1 + 2U_1 + 0.8U_2 + 0.9M,\,1) \\
& W \mid U_1 \sim N(0.4 + U_1,\,1) \\
& Z \mid U_1 \sim N(-0.2 + 1.1U_1,\,1).
\end{align}

All exogenous noise terms are mutually independent.

\section{The benchmark estimators } \label{app:bm_estimators}
We used the following benchmark estimators:  

An oracle plug-in estimator \(\hat{\psi}^{(\mathrm{Oracle,PI})}\) (Equation~\ref{eq:S1_Oracle}) that has access to the full data, had $U$ not been hidden: 

\begin{equation}\label{eq:S1_Oracle}
\hat{\psi}^{(\mathrm{Oracle},PI)} 
= \mathbb{E}_n \Big[
\sum_{a'} 
\hat E[Y \mid M,a',U] \,
\hat p(a' \mid U) \,
[\hat p(M \mid a=1,U)-\hat p(M \mid a=0,U)]
\Big]
\end{equation}

A front-door plug-in estimator \(\hat{\psi}^{(\mathrm{FD,PI})}\) (Equation \ref{eq:S3_FD}) that falsely assumes \(\vec{M}\) satisfies the front-door criterion. 

\begin{equation}\label{eq:S3_FD} 
\hat{\psi}^{(\mathrm{FD},PI)} 
= \mathbb{E}_n \Big[
\sum_{a'} 
\hat E[Y \mid M,a'] \,
\hat p(a') \,
[\hat p(M \mid a=1)-\hat p(M \mid a=0)]
\Big]
\end{equation}